\long\def\ca#1\cb{} 
\newcommand{\braket}[2]{\langle #1 \hspace{1pt} | \hspace{1pt} #2 \rangle}
\newcommand{\ketbra}[2]{| \hspace{1pt} #1 \rangle \langle #2 \hspace{1pt} |}
\newcommand{\ket}[1]{|#1\rangle}               
\newcommand{\bra}[1]{\langle #1|}              
\newcommand{\dya}[1]{\ket{#1}\!\bra{#1}}
\newcommand{\dyad}[2]{\ket{#1}\!\bra{#2}}        
\newcommand{\ip}[2]{\langle #1|#2\rangle}      
\newcommand{\PC}{\mathcal{P}}
\newcommand{\Tr}{{\rm Tr}}
\renewcommand{\geq}{\geqslant}
\renewcommand{\leq}{\leqslant}
\newcommand{\mte}[2]{\langle#1|#2|#1\rangle }
\renewcommand{\vec}[1]{\boldsymbol{#1}}  
\newcommand{\ad}{^\dagger}
\newcommand*{\id}{\openone}
\newtheorem{lemma}{Lemma}
\newtheorem{proposition}{Proposition}
\begin{document}

\title{Covariance matrix preparation for quantum principal component analysis}

\author{Max Hunter Gordon}
\affiliation{Instituto de F\'isica Te\'orica, UAM/CSIC, Universidad Aut\'onoma de Madrid, Madrid, Spain}
\affiliation{Theoretical Division, Los Alamos National Laboratory, Los Alamos, NM 87545, USA}

\author{M. Cerezo}
\affiliation{Information Sciences, Los Alamos National Laboratory, Los Alamos, NM 87545, USA}

\author{Lukasz Cincio}
\affiliation{Theoretical Division, Los Alamos National Laboratory, Los Alamos, NM 87545, USA}
\affiliation{Quantum Science Center, Oak Ridge, TN 37931, USA}

\author{Patrick J. Coles}
\affiliation{Theoretical Division, Los Alamos National Laboratory, Los Alamos, NM 87545, USA}

\begin{abstract}
Principal component analysis (PCA) is a dimensionality reduction method in data analysis that involves diagonalizing the covariance matrix of the dataset. Recently, quantum algorithms have been formulated for PCA based on diagonalizing a density matrix. These algorithms assume that the covariance matrix can be encoded in a density matrix, but a concrete protocol for this encoding has been lacking. Our work aims to address this gap. Assuming amplitude encoding of the data, with the data given by the ensemble $\{p_i, \ket{\psi_i}\}$, then one can easily prepare the ensemble average density matrix $\overline{\rho} = \sum_i p_i \dya{\psi_i}$. We first show that $\overline{\rho}$ is precisely the covariance matrix whenever the dataset is centered. For quantum datasets, we exploit global phase symmetry to argue that there always exists a centered dataset consistent with $\overline{\rho}$, and hence $\overline{\rho}$ can always be interpreted as a covariance matrix. This provides a simple means for preparing the covariance matrix for arbitrary quantum datasets or centered classical datasets. For uncentered classical datasets, our method is so-called ``PCA without centering'', which we interpret as PCA on a symmetrized dataset. We argue that this closely corresponds to standard PCA, and we derive equations and inequalities that bound the deviation of the spectrum obtained with our method from that of standard PCA. We numerically illustrate our method for the MNIST handwritten digit dataset. We also argue that PCA on quantum datasets is natural and meaningful, and we numerically implement our method for molecular ground-state datasets.
\end{abstract}

\maketitle


\section{Introduction}

Interpreting and analyzing large datasets is a technologically important task. Principal component analysis (PCA) can reduce the dimensionality of large datasets to improve their interpretability while minimizing information loss~\cite{jolliffe_principal_2016}. PCA was invented by Karl Pearson in 1901 as an analog of the principal axis theorem in mechanics~\cite{pearson_liii_1901}. It can be thought of as fitting an ellipsoid to the data, where the length of each axis of the ellipsoid quantifies the variance of the data along that axis. PCA is widely used in bioinformatics, facial recognition, quantitative finance, and many other applications.

PCA is typically performed by diagonalizing the covariance matrix of the training data and retaining only the largest-eigenvalue eigenvectors. The covariance matrix quantifies correlations between different features in the data, with matrix elements given by:
\begin{equation}
    Q_{jk} = E[ (X_j - E(X_j)) (X_k - E(X_k))]
\end{equation}
for random variables $X_j$ and $X_k$, with $E$ denoting expectation value. One can see that $Q = [Q_{jk}]$ is a positive semi-definite matrix: $Q\geq 0$. For large datasets, $Q$ is often low rank, due to redundancy or correlation between features.

Quantum computers are naturally suited to solve linear algebra problems due to the underlying linear mathematics of quantum mechanics. Indeed, in 2014, Lloyd et al.~\cite{lloyd2014quantum} proposed a quantum algorithm for performing PCA called quantum principal component analysis (quantum PCA). Quantum PCA has the possibility of exponential speedup (over classical algorithms) when $Q$ is low rank. Lloyd et al.'s algorithm uses multiple copies of a density matrix $\rho$ in order to diagonalize $\rho$ and read off its spectrum. A key assumption in this algorithm is that the covariance matrix $Q$ can be encoded into a density matrix $\rho$. The plausbility of this assumption arises from the fact that both $Q$ and $\rho$ are positive semi-definite. Nevertheless, an explicit method for encoding $Q$ into $\rho$ has not been provided.

More recently, near-term approaches to quantum PCA have been developed~\cite{larose2019variational,cerezo2020variational,verdon2019quantum,ezzell2022quantum} in the framework of variational quantum algorithms~\cite{cerezo2020variationalreview}. For example, Variational Quantum State Diagonalization~\cite{larose2019variational} unitarily rotates $\rho$ towards a diagonal form, while estimating the distance to being diagonal using two copies of $\rho$. The Variational Quantum State Eigensolver~\cite{cerezo2020variationalfidelity} uses a non-degenerate Hamiltonian to extract the principal components of $\rho$ and only requires a single copy of $\rho$. Once again, while these variational methods are more near-term, they still do not address the issue of encoding $Q$ in~$\rho$.

We note that studies of quantum PCA are especially timely given that Huang et al.~\cite{huang2021quantum} recently established that quantum PCA can achieve exponential quantum advantage, at least for quantum data analysis. While classical data are still subject to dequantization arguments~\cite{tang2021quantum,cotler2021revisiting}, this does not preclude the possibility of modest quantum speedups for classical data~\cite{arrazola2019quantum}. (We elaborate on these points in the Discussion section.)

In this work, we make a simple but technologically important observation. We consider a dataset that has been amplitude encoded~\cite{grover2000synthesis,grover2002creating,plesch2011quantum,schuld2018supervised,sanders2019black,nakaji2021approximate,marin2021quantum,zoufal2019quantum}, such that the dataset is described by an ensemble $\{p_i, \ket{\psi_i}\}$ of normalized quantum states $\ket{\psi_i}$. We note that it is straightforward to use classical randomness in order to prepare the ensemble average density matrix $\overline{\rho} = \sum_i p_i \dya{\psi_i}$, as shown in Fig.~\ref{fig:preparation}. We ask the question: How is $\overline{\rho}$ related to the covariance matrix $Q$? For classical datasets (i.e., datasets stored on classical devices), one can show that we have a precise equality $\overline{\rho} = Q$ whenever the dataset is centered (i.e., the mean values for all features are zero).

For uncentered classical datasets $\overline{\rho} = Q + M$ where $M$ is the outer product of the mean vector with itself. This implies that  using $\overline{\rho}$ in place of the true covariance matrix corresponds to performing so-called ``PCA without centering''~\cite{cadima2009relationships}, and we provide an interpretation of this as performing PCA on a symmetrized dataset. We derive relations that upper bound the deviation of the eigenvalues and eigenvectors of $\overline{\rho}$ and $Q$. For example, we show that the eigenvalues of $\overline{\rho}$ are interlaced with those of $Q$, and any eigenvector of $\overline{\rho}$ is shared with $Q$ as long as it is either orthogonal or parallel to the mean vector. 

However, we go a step further and argue that the equality $\overline{\rho} = Q$ holds more generally for all quantum datasets (i.e., datasets prepared on quantum devices from some physical process), due to the unphysical nature of global phase in quantum mechanics. The (somewhat subtle) argument is that any dataset that will be prepared on a quantum device will necessarily lose its global phase information, and consequently one can always assume that such datasets are inherently centered. Therefore, the equality $\overline{\rho} = Q$ holds for all datasets that are prepared on quantum devices. As a consequence, for quantum datasets, we provide a simple means to prepare the covariance matrix as a density matrix, filling in the missing ingredient for quantum PCA algorithms.

\begin{figure}[t]
    \centering
    \includegraphics[width =\columnwidth]{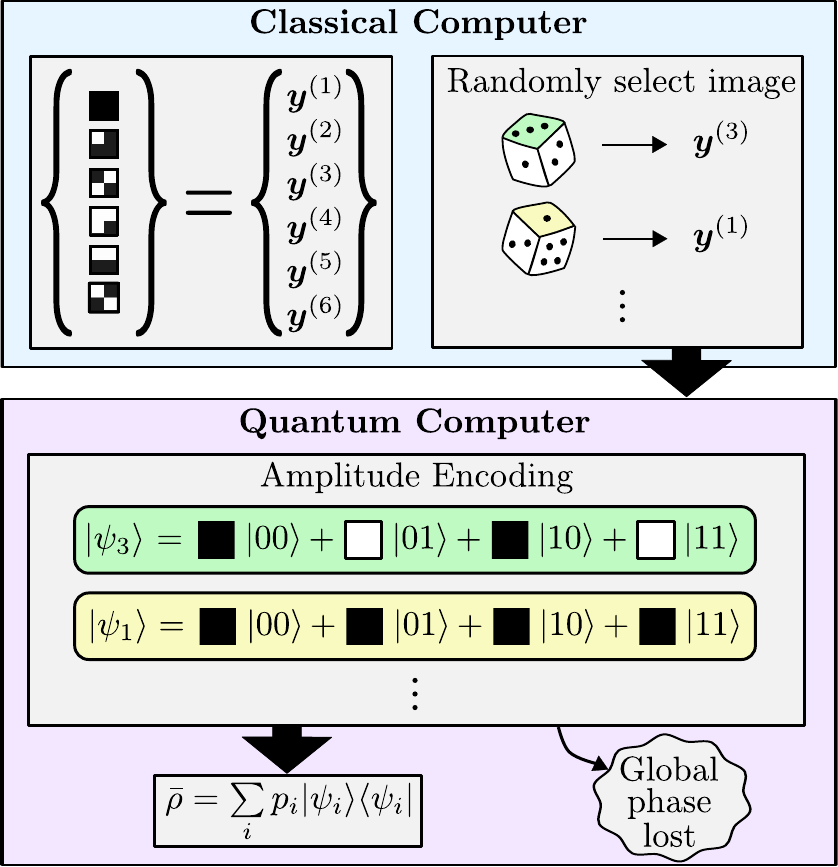}
    \caption{\textbf{Preparation of the ensemble average density matrix~$\overline{\rho}$}. For illustration, we show the well-known bars-and-stripes dataset. To prepare $\overline{\rho}$, images from this dataset are randomly selected according to a probability distribution $\{p_i\}$ and then amplitude encoded on a quantum device. The latter involves encoding each pixel's color into the amplitudes of a quantum state $\ket{\psi_i}$, expanded in the standard basis. Once the data are physically represented on a quantum device, the global phase information is lost and they can be represented as density matrices, of the form $\dya{\psi_i}$. The average state, averaged over many samplings from $\{p_i\}$, is what we call the ensemble average density matrix $\overline{\rho} = \sum_{i=1}^N p_i \dya{\psi_i}$. Preparing $\overline{\rho}$ simply adds a linear overhead (linear in the number of datapoints $N$) to the complexity of state preparation for each datapoint. See Section~\ref{sct_AnalysisOfSampling} for a detailed analysis of the complexity of our method when used as a subroutine of quantum PCA algorithms.}
    \label{fig:preparation}
\end{figure}

To illustrate our method, we numerically simulate both PCA and quantum PCA for the MNIST dataset of handwritten digits. In this case, loss of global phase information (due to amplitude encoding) amounts to losing the information about whether the image colors are white or black. Nevertheless, this loss of information is fairly trivial, and quantum PCA gives principal components that are very similar to those given by PCA. Moreover, quantum PCA performs as good as (or better than) PCA at compressing the MNIST dataset into a small number of features.

We also argue that it is natural to apply quantum PCA to quantum datasets, i.e., data that are inherently quantum mechanical and hence that do not require an amplitude encoding step. The covariance of complex random variables (such as quantum amplitudes) has a clear meaning, and therefore so does PCA on quantum states. Moreover, the loss of global phase information (mentioned above) does not apply to quantum datasets since there is no amplitude encoding step, so quantum PCA for quantum data is even more natural than for classical data. We illustrate quantum PCA for quantum datasets by applying it to a set of molecular ground states for different interatomic distances. Our quantum PCA simulation allows us to accurately compress these molecular ground states into a subspace of small dimension.

\section{Background}\label{sct_background}

Let us first give some background on covariance and PCA. We also discuss the covariance for complex random variables. This is particularly relevant to quantum systems, since quantum amplitudes are complex, in general.

\subsection{Covariance}

For two real random variables $X$ and $Y$, their covariance is given by
\begin{align}
    \text{Cov}(X,Y) &= E[ (X - E(X))(Y - E(Y))]\\
     &= E(XY) - E(X)E(Y)\,.
\end{align}
The sign of the covariance indicates whether the variables correlate (positive sign) or anti-correlate (negative sign), and the magnitude of the covariance quantifies the degree of correlation.

Now suppose that $X$ and $Y$ are complex random variables. In this case, it helpful to think of $X$ and $Y$ as vectors in the complex plane, with some associated randomness. Their covariance is given by:
\begin{align}
    \text{Cov}(X,Y) &= E[ (X - E(X)) \overline{(Y - E(Y))}]\\
     &= E(X\overline{Y}) - E(X)E(\overline{Y})\,,
\end{align}
where the overline indicates the complex conjugate. There is a geometric interpretation of this complex covariance, as follows. The covariance is a vector in the complex plane, with a direction and magnitude. The magnitude of this vector quantifies how correlated the two variables are, while the direction of this vector captures how out-of-phase the two variables are. When the covariance is postive, $X$ and $Y$ are in phase. When the covariance is negative, $X$ and $Y$ are completely out-of-phase. When the covariance is complex, $X$ and $Y$ are partially out-of-phase. For example, if the covariance is purely imaginary, then the two variables are 90 degrees out of phase. Figure~\ref{fig:Covariance} gives an illustration of this geometric interpretation for the covariance of complex random variables.

Hence, the covariance of complex random variables has a clear conceptual interpretation, and it generalizes the covariance for real random variables in perhaps the most natural way possible. This implies that applying the covariance to quantum amplitudes (which are complex numbers, in general) is conceptually meaningful. In turn, this implies that PCA on quantum states (i.e., vectors of quantum amplitudes) is conceptually meaningful.

If one has a set of random variables, then one can look at all of the pairwise covariances, and this is essentially the covariance matrix. Suppose that $\vec{X}= \{X_1, ..., X_d\}$ is a $d$-dimensional vector of complex random variables. Then the $(j,k)$-th entry of the covariance matrix $Q$ is given by:
\begin{align}
    Q_{jk} &= E[ (X_j - E(X_k)) \overline{(X_k - E(X_k))}]\\
     &= E(X_j \overline{X_k}) - E(X_k)E(\overline{X_k})\,.
\end{align}

\begin{figure}[t]
    \centering
    \includegraphics[width =\columnwidth]{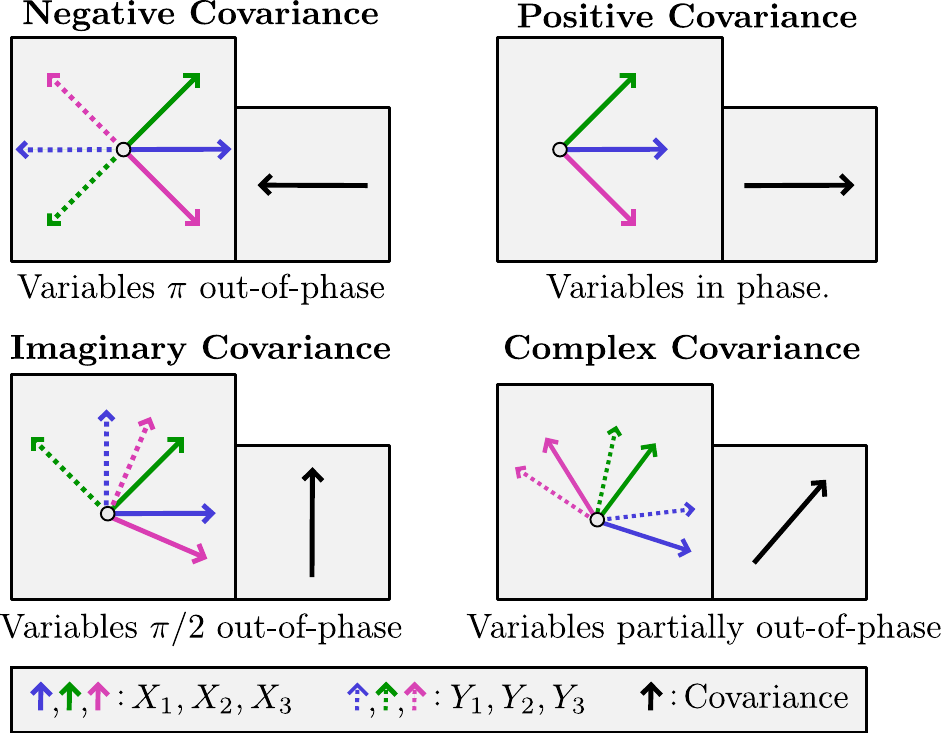}
    \caption{\textbf{Covariance for complex random variables.} For two complex random variables $X$ and $Y$, one can sample from their joint probability distribution $P(X,Y)$, and here we show three different samplings (with the purple, pink, and green colors). Each sampling leads to a pair of vectors in the complex plane (solid vector = $X$, dashed vector = $Y$). In the case of negative covariance, the vectors $X$ and $Y$ are $180$ degrees out of phase. For positive covariance, the vectors are in phase. Imaginary covariance corresponds to vectors that are $90$ degrees out of phase. In general, complex covariance arises from variables that are partially out of phase. In all cases, the direction of the covariance vector is indicated with a black arrow.}
    \label{fig:Covariance}
\end{figure}

\subsection{Principal component analysis}

In the context of PCA, one starts with a dataset $\{\vec{y^{(i)}} \}_{i=1}^N$ of $N$ data points. Each data point $\vec{y^{(i)}} $ is vector over a $d$-dimensional feature space:
\begin{align}
\vec{y^{(i)}} = \{y^{(i)}_1, ..., y^{(i)}_d\}^T\,.
\end{align}
In classical data analysis, the data typically consists of real numbers such that $y^{(i)}_j \in \mathbb{R}$. For quantum data, the $\vec{y^{(i)}}$ vectors would be quantum states and the $y^{(i)}_j$ would be complex amplitudes. Hence, for generality we can allow for $y^{(i)}_j \in \mathbb{C}$.

We can compute the empirical mean of the data as the following vector:
\begin{align}
\vec{\mu} = \{\mu_1, ..., \mu_d\}^T,
\end{align}
where $\mu_j = \frac{1}{N} \sum_{i=1}^N y^{(i)}_j$ is the empirical mean value for the $j$-th feature when the distribution over data points is uniform. More generally, it is possible that each data point $\vec{y^{(i)}} $ could come with an associated probability $p^{(i)}$, and the probability distribution $\{p^{(i)}\}$ might not be uniform. In that case we have
\begin{align}
\mu_j =  \sum_{i=1}^N p^{(i)} y^{(i)}_j\,.
\end{align}

For notational convenience, we define the centered dataset $\{\vec{b^{(i)}}\}_{i=1}^N$ by subtracting off the mean value to each data point:
\begin{align}
\vec{b^{(i)}} = \vec{y^{(i)}} - \vec{\mu}\,.
\end{align}
The covariance between the $j$-th and $k$-th feature is then given by:
\begin{align}
    Q_{jk} &= \sum_{i=1}^{N} p^{(i)} b^{(i)}_j \overline{b^{(i)}_k} \\
    &= \sum_{i=1}^{N} p^{(i)} (y^{(i)}_j - \mu_j) \overline{(y^{(i)}_k - \mu_k)}\\
    &= \left(\sum_{i=1}^{N} p^{(i)} y^{(i)}_j  \overline{y^{(i)}_k} \right)  - \mu_j \overline{\mu_k}\,.
    \label{eqnKjk}
\end{align}
In the special case of the uniform distribution, this is:
\begin{align}
    Q_{jk} &= \frac{1}{N}\sum_{i=1}^{N}  b^{(i)}_j \overline{b^{(i)}_k} \,.
\end{align}
(In some literature, one uses the factor $\frac{1}{N-1}$ instead of $\frac{1}{N}$ due to Bessel's correction. This normalization factor does not play a significant role in our analysis, so we will ignore this issue.)

The PCA method involves diagonalizing $Q$ and keeping the largest eigenavalues. This diagonalization can be done with a unitary matrix, i.e., we can write the diagonal form of the covariance matrix as
\begin{align}
    \Lambda_Q = U_Q Q U_Q\ad
\end{align}
for some unitary matrix $U_Q$ whose columns correspond to the principal components.

\subsection{PCA without centering}

\begin{figure}[t]
    \centering
    \includegraphics[width =\columnwidth]{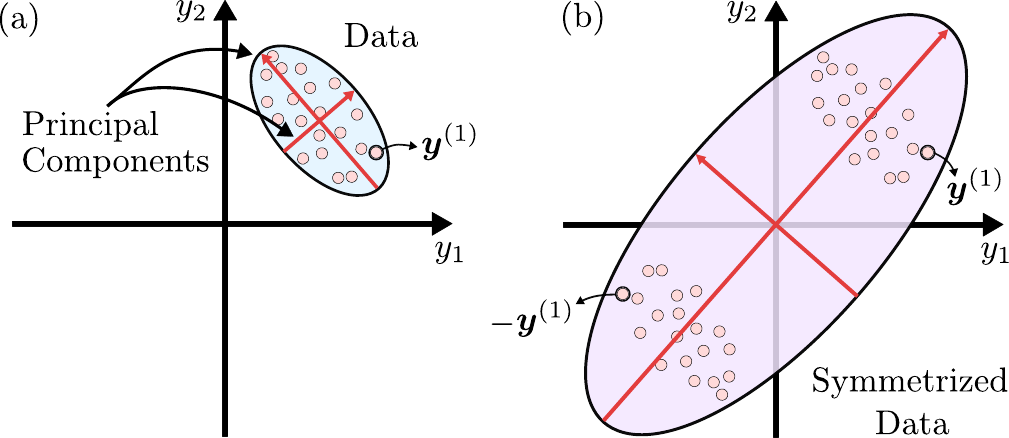}
    \caption{\textbf{Comparison of standard PCA and ``PCA without centering''}. (a) Standard PCA involves diagonalizing $Q$, with the interpretation of fitting an ellipsoid to the data. (b) So-called ``PCA without centering'' can be interpreted as PCA on a symmetrized version of the dataset, e.g., where the dataset is duplicated with the duplicate copy having a minus sign applied. ``PCA without centering'' then corresponds to fitting an ellipsoid to the symmetrized dataset. (See Eq.~\eqref{eqn_data_symmetrized} in the text below for an example of a symmetrized dataset.) }
    \label{fig:symmetrization}
\end{figure}

Let us elaborate here on what is often called ``PCA without centering''. Instead of considering the covariance matrix, one can consider the matrix of (uncentered) second moments, denoted $T$, whose matrix elements are given by:
\begin{align}
    T_{jk}  &= \sum_{i=1}^{N} p^{(i)} y^{(i)}_j  \overline{y^{(i)}_k} \,.
    \label{eqnTjk}
\end{align}
Analogous to standard PCA, one can then diagonalize the $T$ matrix with a unitary matrix $U_T$, to find the diagonal form:
\begin{align}
    \Lambda_T = U_T T U_T\ad\,.
\end{align}
So-called ``PCA without centering'' then involves using the eigenvalues and eigenvectors of $T$ in place of those of $Q$, keeping only the principal components of $T$.

As we will see below, our results indicate that ``PCA without centering'' is relevant to our proposed approach to quantum PCA. Hence, it is important for us to review what is known about this subroutine. 

First of all, the name ``PCA without centering'' is questionable, and we believe an alternative name could be more appropriate. Diagonalizing an uncentered matrix of second moments is not necessarily PCA, and hence the term ``PCA without centering'' is not entirely accurate. Our results below give a different perspective. We argue that ``PCA without centering'' actually corresponds to performing PCA on a symmetrized version of the dataset. So it could be called PCA with data symmetrization. See Fig.~\ref{fig:symmetrization} for an illustration of this, and also see Eq.~\eqref{eqn_data_symmetrized} below for an example of a symmetrized dataset.

Regardless of the terminology, there remains the intriguing question  of how PCA with data symmetrization is related to standard PCA. This question has been studied in detail by Cadima and Jolliffe~\cite{cadima2009relationships}. Their comprehensive work on this topic is extremely useful for our purposes. Therefore, let us review some of their findings here. 

The overall conclusion of Cadima and Jolliffe was that standard PCA and PCA with data symmetrization can be rigorously proven to be closely related. In particular, the eigenvalues and eigenvectors obtained with one method are often similar to those of the other method. For example, they showed that the eigenvalues are interlaced:
\begin{equation}\label{eqn_evalue_interlacing}
   t_d \geq q_d \geq t_{d-1} \geq q_{d-1}\geq ... \geq t_1 \geq q_1\,,
\end{equation}
where $\{t_j\}_{j=1}^d$ and $\{q_j\}_{j=1}^d$ are the eigenvalues of $T$ and $Q$, respectively, listed in non-decreasing order. They also provide sufficient conditions for the sets of eigenvectors of $T$ and $Q$ to perfectly match. For example, one such sufficient condition is when either $T$ or $Q$ has an eigenvector that is identical to the normalized mean vector: $\vec{\mu} /\|\vec{\mu}\|$. It is often the case that the first eigenvector of $T$ is almost co-linear with $\vec{\mu}$, and consequently it is often observed that the eigenvectors of $T$ and $Q$ are quite similar. We refer the reader to Ref.~\cite{cadima2009relationships} for additional theoretical results that establish a close connection between standard PCA and ``PCA without centering''. 

Two minor differences between the setting considered by Cadima and Jolliffe and the setting we consider is that they restrict to real random variables ($y^{(i)}_j \in \mathbb{R}$) and uniform probability distributions over data points ($p^{(i)}=1/N$ for all $i$). In our results (see Sec.~\ref{sct_theorems}), we allow for complex random variables ($y^{(i)}_j \in \mathbb{C}$) and non-uniform probability distributions ($p^{(i)}$ arbitrary). Hence we consider a slightly more general setting than Cadima and Jolliffe.

\section{Definitions}
\label{sec:definitions}

Before stating our results let us first define some notation. 

\subsection{Datasets}

Let us consider a dataset in the form of a set of normalized quantum states $\{\ket{\psi^{(i)}}\}$. For a classical dataset, we can imagine that the $y^{(i)}$ vectors (defined above) are encoded in the $\ket{\psi^{(i)}}$ states through an amplitude encoding procedure. There is a large body of literature on amplitude encoding~\cite{grover2000synthesis,grover2002creating,plesch2011quantum,schuld2018supervised,sanders2019black}, including near-term approaches~\cite{nakaji2021approximate,marin2021quantum,zoufal2019quantum}, and hence we refer the reader to this literature. Hence, in what follows, we can restrict to datasets composed of quantum states.

We denote a dataset of statevectors as
\begin{equation}\label{eqn_dataset_pure1}
    D_{\ket{\psi}} = \{\ket{\psi^{(i)}}\}_{i=1}^N\,,
\end{equation}
and for this dataset, we denote the corresponding dataset of density matrices as
\begin{equation}\label{eqn_dataset_pure2}
    D_{\rho} = \{\rho^{(i)}\}_{i=1}^N = \{\dya{\psi^{(i)}}\}_{i=1}^N\,.
\end{equation}

We note that, more generally, one could have a dataset of mixed states, which we can also denote as $D_{\rho} = \{\rho^{(i)}\}_{i=1}^N$. We give a full treatment of mixed-state datasets in Appendix~\ref{Appendix_mixedstates}. There, we discuss how our results apply to an effective dataset that is constructed from the pure-state decompositions of each mixed state $\rho^{(i)}$. We refer the reader to that appendix for further discussion.

\subsubsection{Classical and quantum datasets}

Some of our theoretical results apply to classical datasets, while others apply to quantum datasets. Hence, let us define these terms here. 

We use the term \textit{classical dataset} to refer to a dataset that is stored on a classical computer or classical device. For classical datasets, the information about global phases for each datapoint is preserved. In other words, multiplicative factors applied to each datapoint have a non-trivial effect on the dataset. This is an important point that distinguishes classical datasets from quantum datasets.

We use the term \textit{quantum dataset} to refer to a dataset that is stored on a quantum computer or quantum device. For quantum datasets, the information about global phases for each datapoint is lost or erased. In other words, multiplicative factors applied to each datapoint have a trivial (or non-physical) effect on the dataset.

\subsection{Ensembles}

We also introduce the notion of ensembles. Ensembles of quantum states are commonly used in quantum information theory~\cite{nielsen2000quantum}. Ensembles include datapoints and their associated probabilities $P = \{p^{(i)}\}_{i=1}^N$. A statevector ensemble is denoted as:
\begin{equation}\label{eqn_ensemble_psi}
    E_{\ket{\psi}} = \{p^{(i)},\ket{\psi^{(i)}}\}_{i=1}^N\,,
\end{equation}
and the associated ensemble of density matrices is
\begin{equation}\label{eqn_ensemble_rho}
    E_{\rho} = \{p^{(i)},\rho^{(i)}\}_{i=1}^N = \{p^{(i)},\dya{\psi^{(i)}}\}_{i=1}^N\,.
\end{equation}

\subsection{Mapping statevectors to density matrices}

For convenience, we denote the mapping that takes a statevector to a density matrix as:
\begin{equation}
    \PC(\ket{\psi}) = \dya{\psi},
\end{equation}
and we call this the outer product mapping, since $\dya{\psi}$ is the outer product. 

With a slight abuse of notation, we can act with this map on a statevector dataset to get the corresponding density matrix dataset. Let $D_{\ket{\psi}} = \{\ket{\psi^{(i)}}\}_{i=1}^N$ be a statevector dataset. Then the corresponding density matrix dataset is:
\begin{equation}
    \PC(D_{\ket{\psi}})  = \{\PC(\ket{\psi^{(i)}})\}_{i=1}^N = \{\dya{\psi^{(i)}}\}_{i=1}^N\,.
\end{equation}

Similarly, we can act with this map on a statevector ensemble to get the corresponding density matrix ensemble. Let $E_{\ket{\psi}} = \{p^{(i)},\ket{\psi^{(i)}}\}_{i=1}^N$ be a statevector ensemble.  Then the corresponding density matrix ensemble is:
\begin{equation}
    \PC(E_{\ket{\psi}})  = \{p^{(i)}, \PC(\ket{\psi^{(i)}})\}_{i=1}^N = \{p^{(i)}, \dya{\psi^{(i)}}\}_{i=1}^N\,.
\end{equation}

We remark that $\PC$ is not an invertible map, since global phase information is lost via the outer product. 

Nevertheless, one can think about the set of statevectors that are consistent with a given density matrix. We borrow terminology from the open-quantum-system literature and refer to a particular choice of statevector (for a given density matrix) as an \textit{unraveling}~\footnote{The open-quantum-system literature refers to a stochastic statevector time evolution as an unraveling of the master equation for the density matrix.}. For example, we say that $e^{i\phi}\ket{\psi}$ is a particular unraveling of the density matrix $\dya{\psi}$. We will also use this same language for ensembles. That is, we will say that $E_{\ket{\psi}}$ is an unraveling of $E_{\rho}$, if it holds that $\PC(E_{\ket{\psi}})= E_{\rho}$.

\subsection{Ensemble average density matrix}

Imagine a simple protocol whereby one samples from the probability distribution $\{p^{(i)}\}$, and if outcome $i$ occurs then one prepares the state $\ket{\psi^{(i)}}$ on a quantum device. This is the protocol previously depicted in Fig.~\ref{fig:preparation}. The result of this protocol is to effectively prepare the state:
\begin{align}\label{eqn_EADM}
\overline{\rho} = \sum_{i=1}^N p^{(i)} \dya{\psi^{(i)}}
\end{align}
which we call the ensemble average density matrix for the ensemble $E_{\ket{\psi}} = \{p^{(i)},\ket{\psi^{(i)}}\}_{i=1}^N$.

\section{Theoretical Results}\label{sct_theorems}

We now proceed to state our theoretical results. We emphasize that, while these results are not mathematically deep, they are conceptually non-trivial and technologically important. 

We also note that the extension of our results to mixed-state datasets is given in Appendix~\ref{Appendix_mixedstates}. In that appendix, we argue that our theoretical results apply to such datasets provided that we consider the pure states that decompose each mixed state as datapoints.

\subsection{Results for classical datasets}

Let us first consider the case of classical datasets defined in Sec.~\ref{sec:definitions}.

\subsubsection{Centered classical datasets}

We begin by considering centered classical datasets. Such datasets have $\vec{\mu} = \vec{0}$, i.e., $\mu_j = 0$ for all $j$. The following proposition gives a simple equality in this case.

\begin{proposition}\label{prop_centered}
Consider a classical dataset of pure states $D_{\ket{\psi}} = \{\ket{\psi^{(i)}}\}_{i=1}^N$. With the ensemble denoted as $E_{\ket{\psi}} = \{p^{(i)},\ket{\psi^{(i)}}\}_{i=1}^N$, the corresponding ensemble average density matrix in \eqref{eqn_EADM} is given by
\begin{equation}
    \overline{\rho}= Q 
\end{equation}
if the dataset is centered, i.e., if $\vec{\mu} = \vec{0}$.
\end{proposition}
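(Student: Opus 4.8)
The plan is to directly compute the matrix elements of $\overline{\rho}$ and show they coincide with $Q_{jk}$ under the centering hypothesis. First I would expand the ensemble average density matrix in the standard (computational) basis $\{\ket{j}\}$. Writing each statevector as $\ket{\psi^{(i)}} = \sum_j y^{(i)}_j \ket{j}$, the amplitudes $y^{(i)}_j$ are precisely the feature values of the $i$-th datapoint, so that $\dya{\psi^{(i)}}$ has matrix elements $y^{(i)}_j \overline{y^{(i)}_k}$. Taking the probability-weighted sum over $i$ then gives
\begin{equation}
    \overline{\rho}_{jk} = \matl{j}{\overline{\rho}}{k} = \sum_{i=1}^N p^{(i)}\, y^{(i)}_j\, \overline{y^{(i)}_k}\,.
\end{equation}

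Next I would compare this directly against the covariance matrix entry derived earlier in the background section, namely Eq.~\eqref{eqnKjk}, which states
\begin{equation}
    Q_{jk} = \left(\sum_{i=1}^{N} p^{(i)} y^{(i)}_j \overline{y^{(i)}_k}\right) - \mu_j \overline{\mu_k}\,.
\end{equation}
The centering hypothesis $\vec{\mu} = \vec{0}$ forces $\mu_j = 0$ for every feature index $j$, so the correction term $\mu_j \overline{\mu_k}$ vanishes identically. Matching the two expressions entry by entry then yields $\overline{\rho}_{jk} = Q_{jk}$ for all $j,k$, and since a matrix is determined by its matrix elements in a fixed basis, this gives the desired operator equality $\overline{\rho} = Q$.

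The one point requiring a little care, rather than a genuine obstacle, is the identification of the amplitudes $y^{(i)}_j$ appearing in the covariance formula with the coefficients of the amplitude-encoded statevectors $\ket{\psi^{(i)}}$; this is really a matter of fixing the encoding convention so that the $j$-th feature of datapoint $i$ is exactly the $j$-th amplitude in the computational basis. Because we are working with a \emph{classical} dataset, global phases are meaningful and preserved, so no phase ambiguity enters and the amplitudes are well-defined rather than defined only up to a phase. Given these conventions, the proof is essentially a one-line substitution, which matches the authors' own remark that the result is conceptually rather than technically deep.
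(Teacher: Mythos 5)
Your proof is correct and is essentially the same argument the paper uses: the paper derives Proposition~\ref{prop_centered} as the $\vec{\mu}=\vec{0}$ special case of Proposition~\ref{prop_uncentered_general}, whose proof is exactly your matrix-element computation $\overline{\rho}_{jk} = \sum_i p^{(i)} y^{(i)}_j \overline{y^{(i)}_k} = Q_{jk} + \mu_j\overline{\mu_k}$ in the standard basis. You have simply inlined the specialization rather than routing through the general uncentered identity, which is an organizational difference, not a mathematical one.
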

\begin{proof}
The proof follows as a special case of Prop.~\ref{prop_uncentered_general} below (i.e., by setting $\vec{\mu} = \vec{0}$ in Prop.~\ref{prop_uncentered_general}). 
\end{proof}

Proposition~\ref{prop_centered} is already useful, as it identifies a method for preparing the covariance matrix for centered datasets, i.e., the method shown in Fig.~\ref{fig:preparation}.

\subsubsection{Uncentered classical datasets}

The following proposition gives the general relationship between $\overline{\rho}$ and $Q$, regardless of whether the data is centered or uncentered.  We note that Eq.~\eqref{eqn_lemma1} below is mathematically related to (although conceptually different from) a result obtained in Ref.~\cite{cadima2009relationships}.

\begin{proposition}
\label{prop_uncentered_general}
Consider a classical dataset of pure states $D_{\ket{\psi}} = \{\ket{\psi^{(i)}}\}_{i=1}^N$. With the ensemble denoted as $E_{\ket{\psi}} = \{p^{(i)},\ket{\psi^{(i)}}\}_{i=1}^N$, the corresponding ensemble average density matrix in \eqref{eqn_EADM} is given by
\begin{equation}\label{eqn_lemma1}
    \overline{\rho}= Q + M
\end{equation}
where $Q$ is the covariance matrix for the ensemble $E_{\ket{\psi}}$. Here, $M = \vec{\mu} \vec{\mu}\ad$ has matrix elements $M_{jk} = \mu_j \overline{\mu_k}$, with $\vec{\mu} = \{\mu_1,...,\mu_d\}^T$ being the vector of mean values for the $d$ features of the data.
\end{proposition}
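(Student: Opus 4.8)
The plan is to prove the operator identity entrywise, by computing the matrix elements of $\overline{\rho}$ in the standard basis $\{\ket{j}\}_{j=1}^d$ and comparing them directly with those of $Q$ and $M$. Since all three operators are expressed in the same basis, it suffices to establish $\overline{\rho}_{jk} = Q_{jk} + M_{jk}$ for every pair $(j,k)$; the operator equation $\overline{\rho} = Q + M$ then follows immediately. First I would expand each statevector in the standard basis as $\ket{\psi^{(i)}} = \sum_j y^{(i)}_j \ket{j}$, so that the amplitudes $y^{(i)}_j$ are identified with the feature values used to define $Q$ in the Background section.

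The central calculation is to sandwich the definition \eqref{eqn_EADM} between $\bra{j}$ and $\ket{k}$:
\begin{align}
\overline{\rho}_{jk} = \bra{j}\overline{\rho}\ket{k} = \sum_{i=1}^N p^{(i)} \ip{j}{\psi^{(i)}}\,\ip{\psi^{(i)}}{k}\,.
\end{align}
The key step is to read off $\ip{j}{\psi^{(i)}} = y^{(i)}_j$ and $\ip{\psi^{(i)}}{k} = \overline{y^{(i)}_k}$, where the latter carries a complex conjugate because it comes from the bra. This yields $\overline{\rho}_{jk} = \sum_i p^{(i)} y^{(i)}_j \overline{y^{(i)}_k}$, which is exactly the matrix of uncentered second moments $T_{jk}$ defined in \eqref{eqnTjk}. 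To finish, I would invoke the relation between $T$ and $Q$ already recorded in \eqref{eqnKjk}, namely $Q_{jk} = \left(\sum_i p^{(i)} y^{(i)}_j \overline{y^{(i)}_k}\right) - \mu_j\overline{\mu_k}$, which rearranges to $\overline{\rho}_{jk} = T_{jk} = Q_{jk} + \mu_j\overline{\mu_k} = Q_{jk} + M_{jk}$, since $M_{jk} = \mu_j\overline{\mu_k}$ by definition.

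I do not anticipate any real obstacle, as the argument reduces to a single matrix-element computation once the amplitudes are identified. The one point requiring care is the placement of the complex conjugation: the bra $\bra{\psi^{(i)}}$ conjugates the $k$-th amplitude, matching the conjugate pattern in the complex covariance $Q_{jk}$, and it is precisely this agreement that makes $\overline{\rho}$ reproduce $T$ rather than some transpose — a subtlety that only arises because we allow $y^{(i)}_j \in \mathbb{C}$. Finally, this also dispatches the centered case of Prop.~\ref{prop_centered}: setting $\vec{\mu} = \vec{0}$ annihilates $M$ and recovers $\overline{\rho} = Q$ at once.
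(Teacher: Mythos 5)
Your proposal is correct and follows essentially the same route as the paper's proof: both expand $\ket{\psi^{(i)}}$ in the standard basis, identify the matrix elements $\overline{\rho}_{jk}$ with the second-moment matrix $T_{jk}$ of Eq.~\eqref{eqnTjk}, and then invoke Eq.~\eqref{eqnKjk} to obtain $\overline{\rho}_{jk} = Q_{jk} + \mu_j\overline{\mu_k}$. The only difference is presentational (you sandwich with $\bra{j}\cdot\ket{k}$ whereas the paper reads off coefficients of the outer-product expansion), and your attention to the placement of the complex conjugate matches the paper's conventions exactly.
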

\begin{proof}
Let us expand each state in terms of its amplitudes in the standard basis: 
\begin{align}
\ket{\psi^{(i)}} = \sum_{j=1}^d y^{(i)}_j \ket{j} \,.
\end{align}
Here we denote the amplitudes as $y^{(i)}_j$ because we assume that one has performed amplitude encoding of the data, and we recall that our notation above used $y^{(i)}_j$ to denote the $j$-th feature value of the $i$-th data point.

Then we have that
\begin{align}
\dya{\psi^{(i)}} = \sum_{j,k=1}^d y^{(i)}_j \overline{y^{(i)}_k}\dyad{j}{k} \,. 
\end{align}
Hence we can write
\begin{align}
\overline{\rho} &= \sum_{i=1}^N \sum_{j,k=1}^d p^{(i)} y^{(i)}_j \overline{y^{(i)}_k}\dyad{j}{k} \\
&= \sum_{j,k=1}^d \overline{\rho}_{jk}\dyad{j}{k}\,.
\end{align}
Here the matrix elements of $\overline{\rho}$ are given by:
\begin{align}\label{eqn_rhojk_Tjk}
\overline{\rho}_{jk} = \sum_{i=1}^N p^{(i)} y^{(i)}_j \overline{y^{(i)}_k} = T_{jk}\,,
\end{align}
where $T_{jk}$ was defined in Eq.~\eqref{eqnTjk}. Using Equation~\eqref{eqnKjk}, we can rewrite~\eqref{eqn_rhojk_Tjk} in terms of the covariance matrix elements as follows:
\begin{align}
\overline{\rho}_{jk} = Q_{jk} + \mu_j \overline{\mu_k} \,,
\end{align}
which completes the proof.
\end{proof}

Equation~\eqref{eqn_lemma1} implies that $\overline{\rho}$ and $Q$ are very closely related, even for uncentered datasets. The difference between $\overline{\rho}$ and $Q$ is a positive semi-definite matrix $M$, which has several interesting properties, as stated in the following lemma.

\begin{lemma}\label{lemma_M}
For uncentered datasets, the $d\times d$ matrix $M = \vec{\mu}\vec{\mu}\ad$ has the following properties:
\begin{itemize}
    \item It is Hermitian and positive semi-definite. 
    \item It is rank-one.
    \item Its only non-zero eigenvalue is 
    \begin{equation}
        \lambda_{\vec{\mu}} = \vec{\mu}\ad\vec{\mu}= \|\vec{\mu}\|^2 = \mu_1^*\mu_1 + ...  + \mu_d^*\mu_d
    \end{equation}
    \item The eigenvector associated with this non-zero eigenvalue is $v_{\vec{\mu}} = \vec{\mu}/ \|\vec{\mu}\|$.
\end{itemize}
\end{lemma}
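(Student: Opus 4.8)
The plan is to treat $M = \vec{\mu}\vec{\mu}\ad$ as a generic rank-one outer product and verify each of the four bullet points by direct computation, since no structural input beyond the definition of $M$ is needed. First I would establish that $M$ is Hermitian by taking the adjoint: $(\vec{\mu}\vec{\mu}\ad)\ad = \vec{\mu}\vec{\mu}\ad$, so $M\ad = M$. Positive semi-definiteness then follows by evaluating the quadratic form on an arbitrary vector $v$, where $v\ad M v = v\ad \vec{\mu}\vec{\mu}\ad v = |\vec{\mu}\ad v|^2 \geq 0$, so every expectation value is a non-negative real number.

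Next I would address the rank and the spectrum together. Since every column of $M$ is a scalar multiple of $\vec{\mu}$, the column space is contained in $\text{span}\{\vec{\mu}\}$, so $\rank M \leq 1$; for an uncentered dataset $\vec{\mu} \neq \vec{0}$, hence $\rank M = 1$ exactly. To pin down the single nonzero eigenvalue, I would apply $M$ to $\vec{\mu}$ and use associativity of the matrix products, $M\vec{\mu} = \vec{\mu}(\vec{\mu}\ad\vec{\mu}) = \|\vec{\mu}\|^2 \vec{\mu}$, identifying $\lambda_{\vec{\mu}} = \vec{\mu}\ad\vec{\mu} = \|\vec{\mu}\|^2$ as the eigenvalue with eigenvector $\vec{\mu}$; normalizing gives $v_{\vec{\mu}} = \vec{\mu}/\|\vec{\mu}\|$. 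Because $M$ is Hermitian with rank one, the remaining $d-1$ eigenvalues are forced to vanish: any $w$ with $\vec{\mu}\ad w = 0$ satisfies $Mw = \vec{\mu}(\vec{\mu}\ad w) = 0$, so the orthogonal complement of $\vec{\mu}$ lies in the kernel, confirming that $\lambda_{\vec{\mu}}$ is the only nonzero eigenvalue.

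There is no real obstacle here — the statement is elementary linear algebra about outer products. The only point requiring a word of care is the consistency of the ``only nonzero eigenvalue'' claim with the rank-one claim, which is exactly why I would verify the kernel explicitly rather than merely assert it. I would close by noting that everything used — Hermiticity, the quadratic-form test for positivity, and associativity of $\vec{\mu}\vec{\mu}\ad\vec{\mu}$ — holds verbatim over $\mathbb{C}$, so the complex-valued setting considered in the paper introduces no complication.
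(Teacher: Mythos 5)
Your proposal is correct, and its overall structure matches the paper's proof: both proceed by direct verification of each bullet, and your eigenvalue computation $M\vec{\mu} = \vec{\mu}(\vec{\mu}\ad\vec{\mu}) = \|\vec{\mu}\|^2\vec{\mu}$ is literally the paper's calculation. The differences are in how the middle two bullets are justified. The paper writes $M = \lambda_{\vec{\mu}}\dya{\vec{\mu}}$ and reads off rank-one and positive semi-definiteness from the fact that $M$ is a positive multiple of a projector onto a pure state (i.e., of a density matrix), leaning on the quantum formalism already in play; you instead use the quadratic-form test $v\ad M v = |\vec{\mu}\ad v|^2 \geq 0$ for positivity and a column-space argument for the rank. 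Your route is more self-contained linear algebra and is actually slightly more complete on two points the paper leaves implicit: you note that $\vec{\mu}\neq\vec{0}$ (the ``uncentered'' hypothesis) is what upgrades $\mathrm{rank}\,M \leq 1$ to $\mathrm{rank}\,M = 1$, and you explicitly verify that the orthogonal complement of $\vec{\mu}$ lies in the kernel, so that $\|\vec{\mu}\|^2$ is indeed the \emph{only} nonzero eigenvalue. The paper's version buys brevity and consistency with its density-matrix framing; yours buys rigor at no real cost. Either is a valid proof.
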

\begin{proof}
The fact that it is Hermitian is obvious since $M\ad = M$. The fact that is rank-one follows from the ability to write $M = \lambda_{\vec{\mu}}\dya{\vec{\mu}}$ as being proportonal to a projector onto a pure quantum state $\ket{\vec{\mu}}$. This also implies that it is positive semi-definite, since it is proportional to a density matrix, with a positive proportionality constant. The fact the $v_{\vec{\mu}} $ is an eigenvector with eigenvalue $\lambda_{\vec{\mu}} = \vec{\mu}\ad\vec{\mu}$ can be directly verified: 
\begin{equation}
    M v_{\vec{\mu}} = \vec{\mu}\vec{\mu}\ad \vec{\mu}/ \|\vec{\mu}\| = (\vec{\mu}\ad \vec{\mu}) \vec{\mu}/ \|\vec{\mu}\| = \lambda_{\vec{\mu}} v_{\vec{\mu}}\,.
\end{equation}
\end{proof}

We are now in a position to relate the eigenvalues of $\overline{\rho}$ and $Q$. Namely, we formulate inequalities that bound the deviation of the eigenvalues of $\overline{\rho}$ from those of $Q$. The proof of the following result relies on Prop.~\ref{prop_uncentered_general}, Lemma~\ref{lemma_M}, and Weyl's theorem as stated in Appendix~\ref{Appendix_Weyl}.

\begin{proposition}\label{prop_spectrum_evalues}
Consider a classical dataset of pure states $D_{\ket{\psi}} = \{\ket{\psi^{(i)}}\}_{i=1}^N$, with the ensemble denoted as $E_{\ket{\psi}} = \{p^{(i)},\ket{\psi^{(i)}}\}_{i=1}^N$. Let $\{r_j\}_{j=1}^d$ and $\{q_j\}_{j=1}^d$, respectively, be the eigenvalues of $\overline{\rho}$ and $Q$ listed in non-decreasing order. Then the eigenvalues of $\overline{\rho}$ are interlaced with the eigenvalues of $Q$, as follows:
\begin{equation}\label{eqn_eigenvalues_rjqj_interlacing}
    r_d\geq q_d \geq r_{d-1} \geq q_{d-1} \geq ... \geq r_1 \geq q_1\,.
\end{equation}
In addition, for each $j$, the following bound holds
\begin{equation}\label{eqn_eigenvalues_rjqj_interval}
    q_j \leq r_j \leq q_j + \|\vec{\mu}\|^2\,,
\end{equation}
where $\vec{\mu}$ is the vector of mean values.
\end{proposition}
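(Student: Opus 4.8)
The plan is to treat $\overline{\rho}$ as a rank-one positive semidefinite perturbation of $Q$ and to read off both claims from standard eigenvalue perturbation theory. The two ingredients are already in hand: by Prop.~\ref{prop_uncentered_general} we have $\overline{\rho} = Q + M$, and by Lemma~\ref{lemma_M} the matrix $M = \vec{\mu}\vec{\mu}\ad$ is Hermitian and positive semidefinite with a single nonzero eigenvalue $\lambda_{\vec{\mu}} = \|\vec{\mu}\|^2$, all of its other eigenvalues being zero. This is precisely the structure to which Weyl-type and interlacing results apply, so the proposition should reduce to invoking those results with the correct indexing.

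First I would establish the interval bound \eqref{eqn_eigenvalues_rjqj_interval} using the basic form of Weyl's theorem (Appendix~\ref{Appendix_Weyl}): for Hermitian $A$ and Hermitian perturbation $B$, with eigenvalues listed in non-decreasing order, each eigenvalue obeys $\lambda_j(A) + \lambda_{\min}(B) \leq \lambda_j(A+B) \leq \lambda_j(A) + \lambda_{\max}(B)$. Setting $A = Q$ and $B = M$, and inserting $\lambda_{\min}(M) = 0$ and $\lambda_{\max}(M) = \|\vec{\mu}\|^2$ from Lemma~\ref{lemma_M}, yields $q_j \leq r_j \leq q_j + \|\vec{\mu}\|^2$ directly, with no further work.

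Next I would establish the interlacing \eqref{eqn_eigenvalues_rjqj_interlacing} using the sharper statement available because $M$ has rank one. The rank-one positive perturbation interlacing theorem -- itself a consequence of Weyl's inequalities once one uses that the second-largest eigenvalue of $M$ vanishes -- gives $q_j \leq r_j \leq q_{j+1}$ for $j = 1, \dots, d-1$ together with $q_d \leq r_d$. Chaining these, the inequalities $r_j \geq q_j$ supply the ``$r \geq q$'' links while the inequalities $r_j \leq q_{j+1}$ (read as $q_{j+1} \geq r_j$) supply the ``$q \geq r$'' links, assembling exactly the alternating chain $r_d \geq q_d \geq r_{d-1} \geq q_{d-1} \geq \dots \geq r_1 \geq q_1$.

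The only real subtlety, and hence the step I would watch most carefully, is the bookkeeping of the ordering convention. Since the proposition fixes non-decreasing order, I must apply both Weyl's inequalities and the rank-one interlacing statement in exactly that convention and align the indices with those used in Appendix~\ref{Appendix_Weyl}; an off-by-one slip or a flipped ordering would break or reverse the chain. Apart from that, the argument is essentially a direct substitution, since all the structural facts about $M$ that it needs are already delivered by Lemma~\ref{lemma_M}.
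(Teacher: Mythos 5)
Your proposal is correct and follows essentially the same route as the paper's own proof: both write $\overline{\rho} = Q + M$ via Prop.~\ref{prop_uncentered_general}, use Lemma~\ref{lemma_M} to identify $M$ as a rank-one positive semidefinite matrix with nonzero eigenvalue $\|\vec{\mu}\|^2$, and then invoke Weyl's inequalities specialized to a rank-one perturbation (exactly the specialization stated in Appendix~\ref{Appendix_Weyl}) to obtain $q_j \leq r_j \leq q_j + \|\vec{\mu}\|^2$ and $r_j \leq q_{j+1}$, which chain into the interlacing. The only cosmetic difference is that you derive the interval bound and the interlacing from two separately stated forms of Weyl's theorem, whereas the paper extracts all three inequalities at once from its rank-one specialization.
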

\begin{proof}
Recall from Prop.~\ref{prop_uncentered_general} that we have $\overline{\rho} = Q + M$. Using this result, we can then apply Weyl's theorem for the eigenvalues of Hermitian matrices, noting that $\overline{\rho}$, $Q$, and $M$ are all Hermitian. Appendix~\ref{Appendix_Weyl} gives the general statement of Weyl's theorem, which relates the eigenvalues of two Hermitian matrices $A$ and $B$ to those of $A+B$. We refer the reader to Ref.~\cite{horn2012matrix} for additional details on Weyl's theorem. In Appendix~\ref{Appendix_Weyl}, we specialize Weyl's theorem to the case where $A$ is an arbitrary Hermitian matrix and $B$ is a rank-one Hermitian matrix whose only non-zero eigenvalue is $\lambda_B >0$, to obtain:
\begin{align}
        \lambda_j(A+B) &\leq \lambda_{j}(A) + \lambda_B\quad\text{for }j = 1, ..., d\label{eqn_evalue_ordering1}\\
        \lambda_j(A+B) &\leq \lambda_{j+1}(A)\quad\text{for }j = 1, ..., d-1 \label{eqn_evalue_ordering2}\\
        \lambda_j(A) &\leq \lambda_{j}(A+B)\quad\text{for }j = 1, ..., d \label{eqn_evalue_ordering3}\,.
\end{align}
In this set of inequalities, the respective eigenvalues of $A$ and $A+B$ are denoted $\{\lambda_j(A)\}_{j=1}^d$ and $\{\lambda_j(A+B)\}_{j=1}^d$, and these eigenvalues are listed in non-decreasing order. Let us apply the above inequalities by choosing $A= Q$, $B=M$, and $A+B = \overline{\rho}$. In this case, $\lambda_j(A+B)$ becomes $r_j$, and $\lambda_j(A)$ becomes $q_j$. Moreover, by invoking Lemma~\ref{lemma_M}, the non-zero eigenvalue of $B$, $\lambda_B$, corresponds to $\|\vec{\mu}\|^2$. Hence, the above inequalities become:
\begin{align}
        r_j &\leq q_j + \|\vec{\mu}\|^2\quad\text{for }j = 1, ..., d\label{eqn_evalue_ordering4}\\
        r_j &\leq q_{j+1}\quad\text{for }j = 1, ..., d-1 \label{eqn_evalue_ordering5}\\
        q_j &\leq r_j\quad\text{for }j = 1, ..., d \label{eqn_evalue_ordering6}\,.
\end{align}
Combining \eqref{eqn_evalue_ordering5} and \eqref{eqn_evalue_ordering6} gives the result in \eqref{eqn_eigenvalues_rjqj_interlacing}. Also, combining \eqref{eqn_evalue_ordering4} and \eqref{eqn_evalue_ordering6} gives the result in \eqref{eqn_eigenvalues_rjqj_interval}.
\end{proof}

Proposition~\ref{prop_spectrum_evalues} implies that the eigenvalues of $\overline{\rho}$ and $Q$ can never deviate too much from each other. For example, for the $j$th eigenvalue, the deviation $\delta_j :=r_j - q_j $ is upper bounded as follows:
\begin{equation}
    \delta_j \leq \min\{\|\vec{\mu}\|^2, (q_{j+1}-q_j)\}\,.
\end{equation}
In this sense, the eigenvalues of $\overline{\rho}$ form a good approximation for the eigenvalues of $Q$.

We remark that Prop.~\ref{prop_spectrum_evalues} is more general than the main result in Ref.~\cite{cadima2009relationships}. Specifically, Prop.~\ref{prop_spectrum_evalues} holds for arbitrary probability distributions $\{p^{(i)}\}$ and for complex random variables $y^{(i)}_j$, whereas Ref.~\cite{cadima2009relationships} restricted to uniform probability distributions and real random variables.

We also note that the proof technique used to prove the main result in Ref.~\cite{cadima2009relationships} is only valid for uniform probability distributions. Therefore, to prove Prop.~\ref{prop_spectrum_evalues}, we could not simply use the proof technique in Ref.~\cite{cadima2009relationships}. Rather, we required a novel proof technique, and this involved using Weyl's theorem.

Let us now relate the eigenvectors of $\overline{\rho}$ and $Q$. In general, for a matrix $A$ and a state $\ket{\psi}$, we can quantify how far $\ket{\psi}$ is from being an eigenvector of $A$, as follows. Let us define the unnormalized state 
\begin{equation}\label{eqn_deltavector}
    \ket{\delta_{A,\ket{\psi}}} = A\ket{\psi} - \mte{\psi}{A}\ket{\psi}\,.
\end{equation}
Note that $\ket{\delta_{A,\ket{\psi}}}$ is the zero vector whenever $\ket{\psi}$ is an eigenvector of $A$. The norm of the $\ket{\delta_{A,\ket{\psi}}}$ vector quantifies how far $\ket{\psi}$ is from being an eigenvector of $A$. Therefore we define the eigenvector error as follows:
\begin{equation}\label{eqn_norm_delta}
    e_{A,\ket{\psi}} = \ip{\delta_{A,\ket{\psi}}}{\delta_{A,\ket{\psi}}}\,.
\end{equation}
Consider the following proposition that quantifies the eigenvector error.
\begin{proposition}\label{prop_spectrum_evector_error}
Consider a classical dataset of pure states $D_{\ket{\psi}} = \{\ket{\psi^{(i)}}\}_{i=1}^N$, with the ensemble denoted as $E_{\ket{\psi}} = \{p^{(i)},\ket{\psi^{(i)}}\}_{i=1}^N$. Let $\ket{Q_j}$ be an eigenvector of the covariance matrix $Q$. For this state, the eigenvector error for the matrix $\overline{\rho}$ is:
\begin{equation}\label{eqn_evectorerror1_prop}
   e_{\overline{\rho},\ket{Q_j}} = \|\vec{\mu}\|^4 |\ip{v_{\vec{\mu}}}{Q_j}|^2 (1- |\ip{v_{\vec{\mu}}}{Q_j}|^2)\,.
\end{equation}
Similarly, let $\ket{R_j}$ be an eigenvector of the ensemble average density matrix $\overline{\rho}$. For this state, the eigenvector error for the matrix $Q$ is given by the same expression:
\begin{equation}\label{eqn_evectorerror2_prop}
   e_{Q,\ket{R_j}} = \|\vec{\mu}\|^4 |\ip{v_{\vec{\mu}}}{R_j}|^2 (1- |\ip{v_{\vec{\mu}}}{R_j}|^2)\,.
\end{equation}
\end{proposition}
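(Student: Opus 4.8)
The plan is to first reduce the eigenvector error of \eqref{eqn_norm_delta} to a variance, and then exploit the rank-one structure of $M$ established in Lemma~\ref{lemma_M}. Expanding $\ip{\delta_{A,\ket{\psi}}}{\delta_{A,\ket{\psi}}}$ for a Hermitian $A$ and a normalized $\ket{\psi}$, and using that $\mte{\psi}{A}$ is real, one finds $e_{A,\ket{\psi}} = \mte{\psi}{A^2} - \mte{\psi}{A}^2$. Geometrically, $\ket{\delta_{A,\ket{\psi}}}$ is exactly the component of $A\ket{\psi}$ orthogonal to $\ket{\psi}$, so computing the error amounts to computing the squared length of that orthogonal component.

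Next I would substitute $\overline{\rho} = Q + M$ from Prop.~\ref{prop_uncentered_general} together with $M = \lambda_{\vec{\mu}}\dya{v_{\vec{\mu}}}$ and $\lambda_{\vec{\mu}} = \|\vec{\mu}\|^2$ from Lemma~\ref{lemma_M}. For the first claim, taking a $Q$-eigenvector $\ket{Q_j}$ (with $Q\ket{Q_j} = q_j\ket{Q_j}$) gives
\begin{equation}
\overline{\rho}\ket{Q_j} = q_j\ket{Q_j} + \lambda_{\vec{\mu}}\ip{v_{\vec{\mu}}}{Q_j}\ket{v_{\vec{\mu}}}\,,
\end{equation}
and $\mte{Q_j}{\overline{\rho}} = q_j + \lambda_{\vec{\mu}}|\ip{v_{\vec{\mu}}}{Q_j}|^2$. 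Subtracting the $\ket{Q_j}$-component therefore cancels the $q_j\ket{Q_j}$ term and leaves $\ket{\delta_{\overline{\rho},\ket{Q_j}}}$ proportional to the part of $\ket{v_{\vec{\mu}}}$ orthogonal to $\ket{Q_j}$; writing $c = \ip{v_{\vec{\mu}}}{Q_j}$, it equals $\lambda_{\vec{\mu}}\, c\,(\ket{v_{\vec{\mu}}} - \overline{c}\,\ket{Q_j})$. Since $\ket{v_{\vec{\mu}}} - \overline{c}\ket{Q_j}$ is $\ket{v_{\vec{\mu}}}$ with its $\ket{Q_j}$-projection removed, its squared norm is $1 - |c|^2$, and hence $e_{\overline{\rho},\ket{Q_j}} = \lambda_{\vec{\mu}}^2 |c|^2 (1-|c|^2)$, which is exactly \eqref{eqn_evectorerror1_prop} upon inserting $\lambda_{\vec{\mu}} = \|\vec{\mu}\|^2$.

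Finally, the second claim follows by the symmetry $Q = \overline{\rho} - M$: repeating the identical computation with $\overline{\rho}$, $\ket{R_j}$, and $-M$ in place of $Q$, $\ket{Q_j}$, and $M$ reproduces the same formula, because the overall sign is squared away in $\ip{\delta_{Q,\ket{R_j}}}{\delta_{Q,\ket{R_j}}}$. The only delicate point I anticipate is the bookkeeping of complex conjugates when expanding the inner product and verifying that the cross terms collapse to the clean factor $|c|^2(1-|c|^2)$; there is no conceptual obstacle beyond this routine algebra.
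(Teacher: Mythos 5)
Your proof is correct and follows essentially the same route as the paper's: both reduce $\ket{\delta_{\overline{\rho},\ket{Q_j}}}$ to the component of $M\ket{Q_j}$ orthogonal to $\ket{Q_j}$ (using $\overline{\rho} = Q + M$ and the eigenvalue relation to cancel the $Q$ part) and then exploit the rank-one form $M = \|\vec{\mu}\|^2 \dya{v_{\vec{\mu}}}$ from Lemma~\ref{lemma_M}, with the reverse direction following by the same symmetry. The only difference is bookkeeping: the paper phrases the cancellation via the projector $\Pi^{\perp} = \id - \dya{Q_j}$ and computes $e_{\overline{\rho},\ket{Q_j}} = \bra{Q_j}M\Pi^{\perp}M\ket{Q_j}$, whereas you expand the residual vector explicitly and apply Pythagoras --- the same computation in different notation.
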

\begin{proof}
This follows from a direct calculation. One can write:
\begin{align}
    \ket{\delta_{\overline{\rho},\ket{Q_j}}} &= \overline{\rho}\ket{Q_j} - \mte{Q_j}{\overline{\rho}}\ket{Q_j}\\
    &= \overline{\rho}\ket{Q_j} - \ket{Q_j}\mte{Q_j}{\overline{\rho}}\\
    &=\Pi^{\perp}\overline{\rho} \ket{Q_j}
\end{align}
where $\Pi^{\perp} = \id - \dya{Q_j}$ is the projector onto the orthogonal complement of $\ket{Q_j}$. Next we use $\overline{\rho} = Q+M$ to write
\begin{align}
    \ket{\delta_{\overline{\rho},\ket{Q_j}}} 
    &=\Pi^{\perp}(Q+M) \ket{Q_j}\\
    &=\Pi^{\perp} M  \ket{Q_j}\,,
\end{align}
which follows from the fact that $\Pi^{\perp} Q  \ket{Q_j} = 0$, since $\ket{Q_j}$ is an eigenvector of $Q$. The eigenvector error is then:
\begin{align}
    e_{\overline{\rho},\ket{Q_j}}&= \bra{Q_j}M \Pi^{\perp} M \ket{Q_j} \\
    &=\bra{Q_j}M^2 \ket{Q_j} - \bra{Q_j}M \dya{Q_j} M \ket{Q_j}
\end{align}
Combining this final expression with $M = \|\vec{\mu}\|^2 \dya{v_{\vec{\mu}}}$ then gives the desired result in \eqref{eqn_evectorerror1_prop}. The above proof can be rewritten analogously to derive \eqref{eqn_evectorerror2_prop}.
\end{proof}

From this proposition, we see that an eigenvector for $\overline{\rho}$ ($Q$) is an eigenvector of $Q$ ($\overline{\rho}$) if and only if at least one of the following conditions is satisfied:
\begin{itemize}
    \item The eigenvector is orthogonal to the mean vector.
    \item The eigenvector is colinear with the mean vector.
    \item The mean vector is the zero vector (i.e, the dataset is centered).
\end{itemize}
Hence, the only way for an eigenvector to not be shared between $\overline{\rho}$ and $Q$ is if the dataset is uncentered and the eigenvector has partial (but not complete) overlap with the mean vector. Because of Prop.~\ref{prop_spectrum_evector_error}, one can see that it is often the case that eigenvectors are approximately shared between the two matrices, $\overline{\rho}$ and $Q$. 

In addition, Eq.~\eqref{eqn_evectorerror2_prop} can be used as a diagnostic tool for ``PCA without centering''. Specifically, this equation can be used to verify the quality of an eigenvector obtained from diagonalizing $\overline{\rho}$. One can simply calculate the overlap of that eigenvector with the mean vector and then use \eqref{eqn_evectorerror2_prop} to quantify the eigenvalue error for the true covariance matrix $Q$.

Now let us consider both the eigenvalues and eigenvectors. The following proposition gives a sufficient condition for the both the eigenvalues and eigenvectors of $\overline{\rho}$ and $Q$ to perfectly match. As we will see in our numerical implementations, this sufficient condition is often satisfied. Namely, it is often the case that the first principal component of $\overline{\rho}$ (i.e., the eigenvector associated with the largest eigenvalue) is very close to being the normalized mean vector. Hence, the following proposition is relevant to practical scenarios of interest.

\begin{proposition}\label{prop_spectrum_evectors}
Consider a classical dataset of pure states $D_{\ket{\psi}} = \{\ket{\psi^{(i)}}\}_{i=1}^N$, with the ensemble denoted as $E_{\ket{\psi}} = \{p^{(i)},\ket{\psi^{(i)}}\}_{i=1}^N$. If one of the eigenvectors of $\overline{\rho}$ or $Q$ is the normalized mean vector $\vec{\mu}/\|\vec{\mu}\|$, then
\begin{itemize}
    \item  $\overline{\rho}$ and $Q$ have a common set of eigenvectors. Any set of eigenvectors for $\overline{\rho}$ is also a valid set of eigenvectors for $Q$, and vice versa.
    \item  The $d-1$ eigenvalues that are not associated with $\vec{\mu}/\|\vec{\mu}\|$ are shared. Letting $r_{\hat{j}}$ and $q_{\hat{j}}$ be the eigenvalues associated with  $\vec{\mu}/\|\vec{\mu}\|$, then $r_{\hat{j}} - q_{\hat{j}} = \|\vec{\mu}\|^2$.
\end{itemize}
\end{proposition}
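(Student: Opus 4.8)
The plan is to build everything on the decomposition $\overline{\rho} = Q + M$ from Prop.~\ref{prop_uncentered_general}, together with the fact from Lemma~\ref{lemma_M} that $M = \|\vec{\mu}\|^2 \dya{v_{\vec{\mu}}}$ is rank-one and supported entirely on the single direction $v_{\vec{\mu}} = \vec{\mu}/\|\vec{\mu}\|$. The hypothesis is that $v_{\vec{\mu}}$ is an eigenvector of one of the two matrices; I would treat the case where it is an eigenvector of $Q$, say $Q\ket{v_{\vec{\mu}}} = q_{\hat{j}}\ket{v_{\vec{\mu}}}$, and note that the other case follows identically upon writing $Q = \overline{\rho} - M$ and running the same argument with $\overline{\rho}$ in place of $Q$, since $-M$ is also rank-one and supported on $v_{\vec{\mu}}$.

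First I would establish the eigenvalue relation. Acting with $\overline{\rho}$ on $\ket{v_{\vec{\mu}}}$ and using $M\ket{v_{\vec{\mu}}} = \|\vec{\mu}\|^2 \ket{v_{\vec{\mu}}}$ gives $\overline{\rho}\ket{v_{\vec{\mu}}} = (q_{\hat{j}} + \|\vec{\mu}\|^2)\ket{v_{\vec{\mu}}}$, so $v_{\vec{\mu}}$ is automatically an eigenvector of $\overline{\rho}$ as well, with eigenvalue $r_{\hat{j}} = q_{\hat{j}} + \|\vec{\mu}\|^2$. This immediately yields the claimed shift $r_{\hat{j}} - q_{\hat{j}} = \|\vec{\mu}\|^2$.

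Next I would prove the common-eigenbasis statement. The cleanest route is to show $[Q,\overline{\rho}] = [Q,M] = 0$: since $Q$ is Hermitian and $\ket{v_{\vec{\mu}}}$ is an eigenvector, both $Q\ket{v_{\vec{\mu}}} = q_{\hat{j}}\ket{v_{\vec{\mu}}}$ and $\bra{v_{\vec{\mu}}}Q = q_{\hat{j}}\bra{v_{\vec{\mu}}}$ hold, and these make the two terms of $[Q,M] = \|\vec{\mu}\|^2(Q\dya{v_{\vec{\mu}}} - \dya{v_{\vec{\mu}}}Q)$ cancel. Commuting Hermitian matrices are simultaneously diagonalizable, so a common orthonormal eigenbasis exists. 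Equivalently, and more explicitly for the eigenvalue bookkeeping, I would use that $\ket{v_{\vec{\mu}}}$ is a common eigenvector, so both Hermitian matrices leave its orthogonal complement $V^{\perp}$ invariant; on $V^{\perp}$ the perturbation vanishes, $M|_{V^{\perp}} = 0$, hence $\overline{\rho}|_{V^{\perp}} = Q|_{V^{\perp}}$. Any orthonormal eigenbasis of this common restriction furnishes the remaining $d-1$ eigenvectors, shared between $\overline{\rho}$ and $Q$ with identical eigenvalues, which is exactly the second bullet.

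The one point that needs care --- and the only real obstacle --- is the phrase ``any set of eigenvectors.'' This is literally true whenever the spectra are non-degenerate, but it can fail in the accidental case where $r_{\hat{j}}$ coincides with an eigenvalue living in $V^{\perp}$: an arbitrary basis of that degenerate $\overline{\rho}$-eigenspace could mix $\ket{v_{\vec{\mu}}}$ with a $V^{\perp}$ vector and then fail to diagonalize $Q$, because $Q$ assigns these two directions different eigenvalues. I would therefore phrase the conclusion as the \emph{existence} of a common eigenbasis (simultaneous diagonalizability), noting that the freedom to choose such a basis always suffices for PCA, and that within $V^{\perp}$ the eigenvectors are shared verbatim.
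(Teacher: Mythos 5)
Your argument is correct, and it rests on the same two pillars as the paper's proof: the decomposition $\overline{\rho}=Q+M$ from Prop.~\ref{prop_uncentered_general} and the rank-one form $M=\|\vec{\mu}\|^2\dya{v_{\vec{\mu}}}$ from Lemma~\ref{lemma_M}. The difference is in execution. The paper writes an explicit spectral decomposition of $Q$ in which $\ket{v_{\vec{\mu}}}$ appears as an eigenvector, adds $M$, and reads off \eqref{eqn_spectrum_inheritance} as a spectral decomposition of $\overline{\rho}$ in which only the $\hat{j}$th eigenvalue is shifted by $\|\vec{\mu}\|^2$, with the reverse direction handled by the same manipulation starting from $\overline{\rho}$. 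You instead verify the eigenvalue shift by applying $\overline{\rho}$ to $\ket{v_{\vec{\mu}}}$, and obtain the common eigenvectors structurally: either from $[Q,\overline{\rho}]=[Q,M]=0$ together with simultaneous diagonalizability of commuting Hermitian matrices, or from the observation that both matrices preserve $V^{\perp}$, the orthogonal complement of $\ket{v_{\vec{\mu}}}$, and coincide there because $M|_{V^{\perp}}=0$. The two routes are equivalent in substance; the paper's is more economical on bookkeeping, while yours exposes the underlying mechanism (commutation, and exact agreement away from the mean direction) and transfers between the two cases of the hypothesis with no extra work.

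Your closing caveat is a genuine catch rather than pedantry. As literally stated, the first bullet of Prop.~\ref{prop_spectrum_evectors} (``any set of eigenvectors for $\overline{\rho}$ is also a valid set of eigenvectors for $Q$, and vice versa'') fails in the degenerate situation you describe: if $q_{\hat{j}}+\|\vec{\mu}\|^2$ happens to equal some $q_j$ with $j\neq\hat{j}$, then $\overline{\rho}$ has a degenerate eigenspace containing both $\ket{v_{\vec{\mu}}}$ and the orthogonal direction $\ket{Q_j}$, and a legitimate eigenbasis of $\overline{\rho}$ may contain $(\ket{v_{\vec{\mu}}}+\ket{Q_j})/\sqrt{2}$, which is not an eigenvector of $Q$ because $Q$ assigns those two directions eigenvalues differing by $\|\vec{\mu}\|^2\neq 0$. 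The paper's proof does not address this: exhibiting the single shared decomposition \eqref{eqn_spectrum_inheritance} establishes the \emph{existence} of a common eigenbasis, not that every eigenbasis of one matrix diagonalizes the other. Your weakened formulation --- simultaneous diagonalizability, with the eigenvectors in $V^{\perp}$ and their eigenvalues shared verbatim --- is what is true in all cases (and is all that the PCA application requires); the paper's phrasing is accurate only under the implicit assumption that no such accidental degeneracy occurs.
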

\begin{proof}
Denote the normalized mean vector as the quantum state $\ket{v_{\vec{\mu}}} = \vec{\mu}/ \|\vec{\mu}\|$. Consider a spectral decomposition of $Q$ given by
\begin{equation}
    Q = \sum_{j\neq \hat{j}} q_j \dya{Q_j} + q_{\hat{j}} \dya{v_{\vec{\mu}}}
\end{equation}
where the $\ket{Q_j}$ states are orthogonal eigenvectors (and also orthogonal to $\ket{v_{\vec{\mu}}}$). Then, we invoke the equation $\overline{\rho} = Q+M$ to obtain
\begin{align}
    \overline{\rho} &= Q + \|\vec{\mu}\|^2 \dya{v_{\vec{\mu}}}\\
    &= \sum_{j\neq \hat{j}} q_j \dya{Q_j} + (q_{\hat{j}}+ \|\vec{\mu}\|^2) \dya{v_{\vec{\mu}}}\,.\label{eqn_spectrum_inheritance}
\end{align}
Note that this is also a spectral decomposition of $\overline{\rho}$. Hence, we see that $\overline{\rho}$ inherits the same eigenvectors as those of $Q$, and all of the eigenvalues are also the same except for the $\hat{j}$th eigenvalue, which is shifted by $\|\vec{\mu}\|^2$. Finally, note that one can apply the exact same argument in the reverse direction, where one first starts with a spectral decomposition of $\overline{\rho}$, and then one derives the corresponding spectral decomposition of $Q$.  
\end{proof}

Thusfar, we have related the eigenvalues and eigenvectors of $\overline{\rho}$ and $Q$. We remark that one can also relate the diagonal elements of $\overline{\rho}$ and $Q$, and we formally state this in Appendix~\ref{Appendix_diagonalelements}.

\subsection{Results for quantum datasets}

\begin{figure}[t]
    \centering
    \includegraphics[width = \columnwidth]{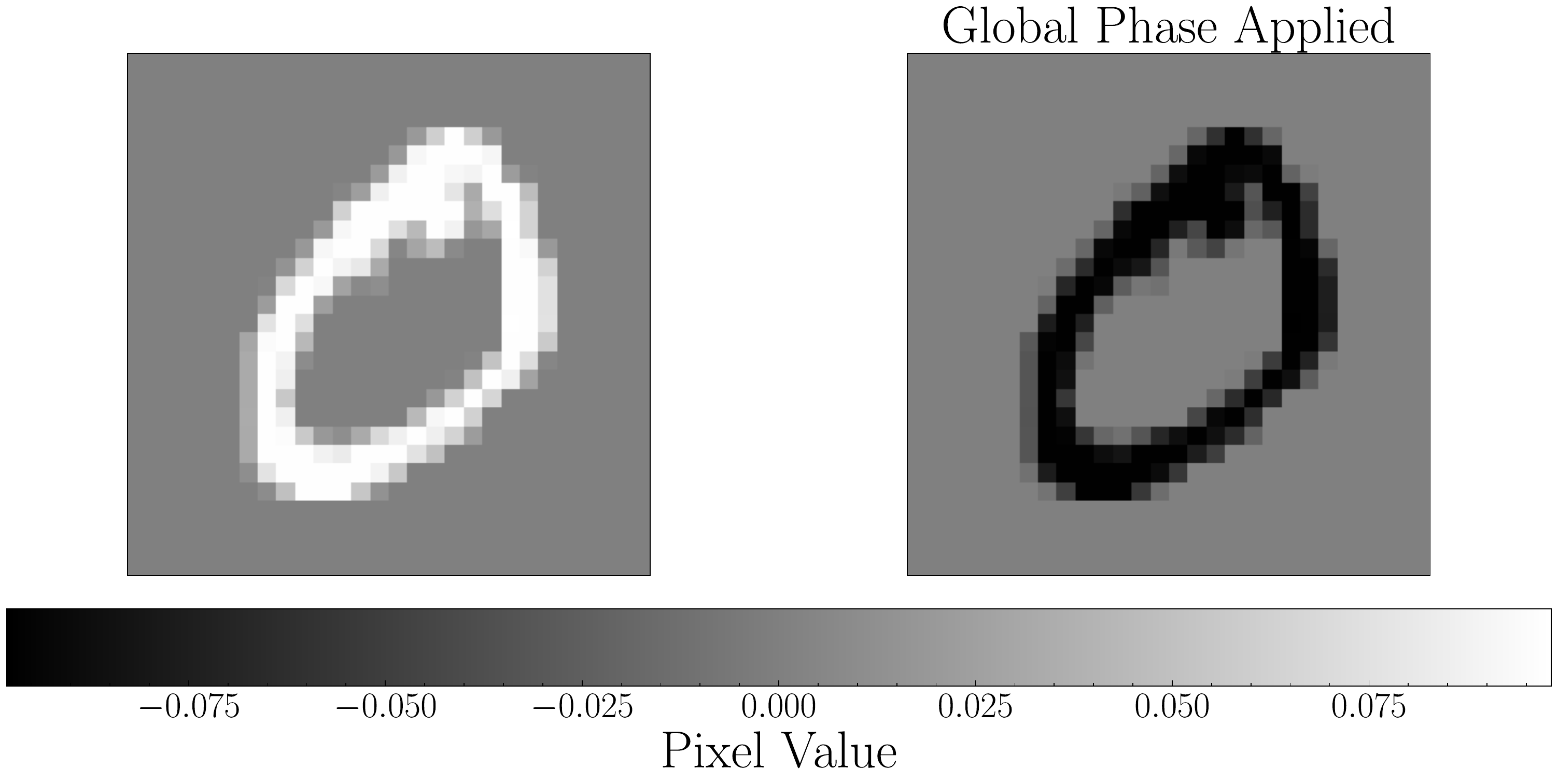}
    \caption{\textbf{Impact of global phase on classical data.} Unlike quantum states, classical datapoints are impacted by applying a global phase. Here we give a visual demonstration of this, where multiplying by a minus sign inverts the color of a handwritten digit from white to black.}
    \label{fig:globalphase_digit}
\end{figure}

Classical datapoints are affected by application of a global phase, such as multiplying by a minus sign as in Fig.~\ref{fig:globalphase_digit}. One can see in this case that the minus sign inverts the color of the image, from a white digit to a black digit.

However, quantum states are invariant under application of a global phase to the state vector. This can be seen from the fact that the density matrix is unaffected by a global phase applied to the state vector:
\begin{align}
    \ket{\psi}&\rightarrow e^{i\phi}\ket{\psi}\notag\\
    \dya{\psi}&\rightarrow e^{i\phi}e^{-i\phi}\dya{\psi} = \dya{\psi}\,.
\end{align}
This global phase symmetry implies that global phases have no physical effect or physical manifestation. We now discuss how global phase symmetry can allow us to assume that all datasets that are prepared on quantum devices admit a description that is centered. More precisely, there are multiple statevector descriptions for a given dataset of density matrices, and there always exist some statevector descriptions that are centered.

We first state the following lemma.
\begin{lemma}\label{lemma2}
Consider a quantum dataset of pure states. In this case, the global phase information is lost. Hence the dataset can be described by an ensemble of density matrices $E_{\rho} = \{p^{(i)},\dya{\psi^{(i)}}\}_{i=1}^N$. Then, there always exists a statevector ensemble $E_{\ket{\tilde{\psi}}} = \{\tilde{p}^{(j)},\ket{\tilde{\psi}^{(j)}}\}_{j=1}^{\tilde{N}}$ that satisfies the following conditions:
\begin{itemize}
    \item $E_{\ket{\tilde{\psi}}}$ physically corresponds to the aforementioned $E_{\rho}$, in the sense that applying the outer product mapping leads to $\PC(E_{\ket{\tilde{\psi}}}) = E_{\rho}$. In other words, $E_{\ket{\tilde{\psi}}}$ is an unraveling of $E_{\rho}$.
    \item $E_{\ket{\tilde{\psi}}}$ is centered, i.e., the mean value of all features is zero.
\end{itemize}
\end{lemma}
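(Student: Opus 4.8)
The plan is to construct the required centered unraveling explicitly, exploiting the fact that $-1 = e^{i\pi}$ is a global phase and hence that $\dya{-\psi} = \dya{\psi}$. The key observation is that every density matrix in $E_\rho$ can be written as the equal mixture $\dya{\psi^{(i)}} = \frac{1}{2}\dya{\psi^{(i)}} + \frac{1}{2}\dya{-\psi^{(i)}}$, and the presence of the ``$-$'' copy is precisely what will let me cancel the mean. This is exactly the symmetrization already alluded to in Fig.~\ref{fig:symmetrization} and Eq.~\eqref{eqn_data_symmetrized}.

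First I would define the doubled statevector ensemble, with $\tilde{N} = 2N$, by splitting each original entry $\{p^{(i)}, \ket{\psi^{(i)}}\}$ into two entries of half the weight, namely $\{p^{(i)}/2, \ket{\psi^{(i)}}\}$ and $\{p^{(i)}/2, -\ket{\psi^{(i)}}\}$, for $i = 1, \ldots, N$. I would then verify the unraveling condition: applying $\PC$ to the two copies produces $\dya{\psi^{(i)}}$ and $\dya{-\psi^{(i)}}$, which by global phase symmetry are the same matrix, so the two half-weight entries recombine into the single entry $\{p^{(i)}, \dya{\psi^{(i)}}\}$; summing over $i$ recovers $E_\rho$, giving $\PC(E_{\ket{\tilde\psi}}) = E_\rho$.

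Next I would check centering. Writing the amplitudes of $\ket{\psi^{(i)}}$ in the standard basis as $y^{(i)}_j$, the $j$-th feature of the doubled ensemble receives a contribution $\frac{p^{(i)}}{2}\, y^{(i)}_j$ from the plus copy and $\frac{p^{(i)}}{2}\,(-y^{(i)}_j)$ from the minus copy, and these cancel for every $i$. Hence the mean $\tilde{\mu}_j = 0$ for all $j$, so $E_{\ket{\tilde\psi}}$ is centered, as required by the second bullet.

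The main obstacle is conceptual rather than computational: it lies in the precise sense in which the equality $\PC(E_{\ket{\tilde\psi}}) = E_\rho$ is meant, since the constructed ensemble nominally carries $2N$ entries whereas $E_\rho$ carries $N$. The resolution rests entirely on global phase being unphysical — because $\ket{\psi^{(i)}}$ and $-\ket{\psi^{(i)}}$ map to the identical density matrix, the two half-weight entries are physically indistinguishable and may legitimately be merged into one. I would make this merging convention explicit so that equality of density-matrix ensembles is well-defined, which is the only delicate point in an otherwise elementary argument.
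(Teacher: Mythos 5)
Your proposal is correct and follows essentially the same route as the paper's own proof: you construct the identical doubled ensemble $\{(p^{(i)}/2,\ket{\psi^{(i)}}),(p^{(i)}/2,-\ket{\psi^{(i)}})\}_{i=1}^N$, verify centering by the same termwise cancellation, and recover $E_{\rho}$ under $\PC$ by aggregating the duplicate density-matrix entries with summed probabilities, exactly as in Eqs.~\eqref{eqn_Epsi2N}--\eqref{eqn_PCEpsi2n}. The merging convention you flag as the delicate point is handled the same way in the paper, so nothing is missing.
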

\begin{proof}
We will prove this lemma by construction, i.e., by constructing an $E_{\ket{\tilde{\psi}}}$ that satisfies the required conditions.

Let $D_{\rho} = \{\dya{\psi^{(i)}}\}_{i=1}^N$, and let $D_{\ket{\psi}} = \{\ket{\psi^{(i)}}\}_{i=1}^N$ be a particular unraveling of $D_{\rho}$.  

Now let us consider a symmetrized dataset composed of $\tilde{N}=2N$ datapoints that involves appending the set $(-D_{\ket{\psi}})$ onto the end of the set $D_{\ket{\psi}}$. We write the overall dataset as 
\begin{align}\label{eqn_data_symmetrized}
    D_{\ket{\tilde{\psi}}} &= \{\ket{\psi^{(1)}},...,\ket{\psi^{(N)}},-\ket{\psi^{(1)}},...,-\ket{\psi^{(N)}}    \}\,.
\end{align}
For the probabilities associated with this dataset, we use
\begin{align}
    \tilde{P} &= \frac{1}{2}\{ p^{(1)},..., p^{(N)},p^{(1)},..., p^{(N)}\}\,.
\end{align}
Using the previous two equations, we can write the overall ensemble as:
\begin{align}\label{eqn_Epsi2N}
    E_{\ket{\tilde{\psi}}}  =& \bigg\{ (\frac{p^{(1)}}{2},\ket{\psi^{(1)}}),..., (\frac{p^{(N)}}{2},\ket{\psi^{(N)}}),\notag\\
    &(\frac{p^{(1)}}{2},-\ket{\psi^{(1)}}),...,(\frac{p^{(N)}}{2},-\ket{\psi^{(N)}})\bigg\}\,.
\end{align}
Let us note that this ensemble, $E_{\ket{\tilde{\psi}}}$, is centered. One can see this by computing the mean values for each feature as:
\begin{align}\label{eqn_Epsi2N_centered}
    \mu_j = \sum_{i=1}^{N} \frac{p^{(i)}}{2} y^{(i)}_j +\sum_{i=1}^N \frac{p^{(i)}}{2} (- y^{(i)}_j) =0
\end{align}
Here we used the fact that the feature values for $-\ket{\psi^{(i)}}$ are the negatives of the feature values for $\ket{\psi^{(i)}}$. Hence we have shown that $E_{\ket{\tilde{\psi}}}$ is centered.

Now we just need to show that $E_{\ket{\tilde{\psi}}}$ is an unraveling of $E_{\rho}$. In other words we need to show that $\PC(E_{\ket{\tilde{\psi}}}) = E_{\rho}$ where $\PC$ is the outer product mapping. Applying $\PC$ gives:
\begin{align}\label{eqn_PCEpsi2n}
    \PC(E_{\ket{\tilde{\psi}}})  =& \bigg\{ (\frac{p^{(1)}}{2},\dya{\psi^{(1)}}),..., (\frac{p^{(N)}}{2},\dya{\psi^{(N)}}),\notag\\
    &(\frac{p^{(1)}}{2},\dya{\psi^{(1)}}),...,(\frac{p^{(N)}}{2},\dya{\psi^{(N)}})\bigg\}\,.
\end{align}
Note that there are $N$ datapoints that appear twice in this ensemble with the same probability. These redundant datapoints can be aggregated, with their propabilities summed together. After we aggregate these datapoints together, we see that the ensemble $\PC(E_{\ket{\tilde{\psi}}})$ is equivalent to the ensemble $E_{\rho} = \{p^{(i)},\dya{\psi^{(i)}}\}_{i=1}^N$. This proves the desired result.
\end{proof}

In what follows, it will help if we use more explicit notation. We will use $\overline{\rho}(E_{\rho})$, instead of $\overline{\rho}$, to indicate that the ensemble average density matrix is a function of the ensemble $E_{\rho}$ in \eqref{eqn_ensemble_rho}. We will also use $Q(E_{\ket{\psi}})$, instead of $Q$, to indicate that the covariance matrix is a function of the statevector ensemble $E_{\ket{\psi}}$ in \eqref{eqn_ensemble_psi}. 

With the previous lemma in hand, we can state the following proposition for quantum datasets.
\begin{proposition}\label{prop_quantumdataset}
Consider a quantum dataset of pure states. In this case, the global phase information is lost. Hence the dataset can be described by an ensemble of density matrices $E_{\rho} = \{p^{(i)},\dya{\psi^{(i)}}\}_{i=1}^N$. Then, there always exists a statevector ensemble $E_{\ket{\tilde{\psi}}} = \{\tilde{p}^{(j)},\ket{\tilde{\psi}^{(j)}}\}_{j=1}^{\tilde{N}}$ that satisfies the following conditions:
\begin{itemize}
    \item $E_{\ket{\tilde{\psi}}}$ physically corresponds to the aforementioned $E_{\rho}$, in the sense that applying the outer product mapping leads to $\PC(E_{\ket{\tilde{\psi}}}) = E_{\rho}$. In other words, $E_{\ket{\tilde{\psi}}}$ is an unraveling of $E_{\rho}$. 
    \item The covariance matrix $Q(E_{\ket{\tilde{\psi}}})$ for $E_{\ket{\tilde{\psi}}}$ is equal to the ensemble average density matrix for $E_{\rho}$:
    \begin{equation}
    Q(E_{\ket{\tilde{\psi}}}) = \overline{\rho}(E_{\rho})\,.
\end{equation}
\end{itemize}
\end{proposition}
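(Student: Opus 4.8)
The plan is to combine the two preceding results directly: the centered unraveling guaranteed by Lemma~\ref{lemma2} and the equality $\overline{\rho}=Q$ for centered datasets established in Proposition~\ref{prop_centered}. The observation that glues them together is that the ensemble average density matrix depends only on the density matrices appearing in the ensemble (together with their probabilities), and not on the particular choice of statevector representatives; that is, $\overline{\rho}$ is invariant under the choice of unraveling.

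First I would invoke Lemma~\ref{lemma2} to produce a statevector ensemble $E_{\ket{\tilde{\psi}}}$ that is simultaneously an unraveling of $E_{\rho}$, so that $\PC(E_{\ket{\tilde{\psi}}})=E_{\rho}$, and centered, so that its mean vector vanishes. The first bullet of the proposition is then immediate, since it is precisely the unraveling property supplied by the lemma.

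Next I would apply Proposition~\ref{prop_centered} to the ensemble $E_{\ket{\tilde{\psi}}}$. Because $E_{\ket{\tilde{\psi}}}$ is centered, the proposition yields
\begin{equation}
Q(E_{\ket{\tilde{\psi}}}) = \overline{\rho}(E_{\ket{\tilde{\psi}}})\,.
\end{equation}
It then remains to identify $\overline{\rho}(E_{\ket{\tilde{\psi}}})$ with $\overline{\rho}(E_{\rho})$. This is where unraveling invariance enters: since $\PC(E_{\ket{\tilde{\psi}}})=E_{\rho}$, the density matrices produced by $E_{\ket{\tilde{\psi}}}$, after aggregating any repeated entries and summing their probabilities, are exactly those of $E_{\rho}$ with the same weights, so the two ensemble averages coincide and $\overline{\rho}(E_{\ket{\tilde{\psi}}})=\overline{\rho}(E_{\rho})$. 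Chaining the two equalities gives $Q(E_{\ket{\tilde{\psi}}})=\overline{\rho}(E_{\rho})$, which is the second bullet.

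I do not expect a genuine obstacle, since all the machinery is already assembled; the only point needing a little care is the invariance argument in the last step. One must verify that replacing each statevector by a phase-adjusted copy, here the $\pm$ symmetrization of Lemma~\ref{lemma2}, leaves each outer product unchanged, so that the bookkeeping of probabilities through the aggregation step is correct. Concretely, the vectors $\ket{\psi^{(i)}}$ and $-\ket{\psi^{(i)}}$ produce the same outer product $\dya{\psi^{(i)}}$, so the two half-weight copies recombine into a single weight-$p^{(i)}$ term, reproducing $\overline{\rho}(E_{\rho})=\sum_{i} p^{(i)}\dya{\psi^{(i)}}$ exactly.
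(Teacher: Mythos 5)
Your proposal is correct and follows essentially the same route as the paper: invoke Lemma~\ref{lemma2} for a centered unraveling, apply Prop.~\ref{prop_centered} to it, and then identify $\overline{\rho}(E_{\ket{\tilde{\psi}}})$ with $\overline{\rho}(E_{\rho})$. Your explicit check that the $\pm$ half-weight copies recombine under the outer product map is a welcome elaboration of a step the paper states only briefly, but it is the same argument.
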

\begin{proof}
We will prove this by constructing an $E_{\ket{\tilde{\psi}}}$ that satisfies the required conditions. Specifically, we will use the same $E_{\ket{\tilde{\psi}}}$ used to prove the previous lemma, with $E_{\ket{\tilde{\psi}}}$ given in Eq.~\eqref{eqn_Epsi2N}.

For this choice of $E_{\ket{\tilde{\psi}}}$, we already showed that it satisfies the first condition, i.e., that it is an unraveling of $E_{\rho}$ (see Eq.~\eqref{eqn_PCEpsi2n}). 

Hence we just need to show the second condition. The second condition essentially follows by combining Prop.~\ref{prop_centered} and Lemma~\ref{lemma2}. In more detail, in Lemma~\ref{lemma2} we showed that $E_{\ket{\tilde{\psi}}}$ is centered, see Eq.~\eqref{eqn_Epsi2N_centered}. Hence $\vec{\mu} = \vec{0}$ for $E_{\ket{\tilde{\psi}}}$. Next we apply Prop.~\ref{prop_centered} to see that
\begin{equation}
    Q(E_{\ket{\tilde{\psi}}}) = \overline{\rho}(E_{\ket{\tilde{\psi}}})\,,
\end{equation}
where $\overline{\rho}(E_{\ket{\tilde{\psi}}})$ is the ensemble average density matrix for $\overline{\rho}(E_{\ket{\tilde{\psi}}})$. One can also see this by applying Eq.~\eqref{eqn_lemma1} and setting $\vec{\mu}=0$. Finally, one can note that the ensemble average density matrix is the same for $E_{\ket{\tilde{\psi}}}$ and $E_{\rho}$,
\begin{equation}
    \overline{\rho}(E_{\ket{\tilde{\psi}}}) = \overline{\rho}(E_{\rho})\,.
\end{equation}
Combining the two previous equations gives the desired result.
\end{proof}

\subsection{Implications of theoretical results}

The implications of our theoretical results are as follows. If you have any dataset of pure states, then their ensemble average density matrix $\overline{\rho}$ can always be interpreted as a covariance matrix, for an appropriately symmetrized (or centered) version of the dataset. 

If the original dataset of pure states lives on a quantum device, then their global phase information has already been erased, and hence one can always interpret this dataset as being centered. This means that the action of extracting the principal eigenvectors of $\overline{\rho}$, for any pure state dataset living on a quantum device, can always be interpreted as performing PCA on this dataset. This result is extremely useful in light of the fact that quantum PCA for quantum datasets could lead to exponential quantum speedup~\cite{cotler2021revisiting,huang2021quantumadvantage}. We therefore believe that Prop.~\ref{prop_quantumdataset} will have technological importance in the quest for quantum advantage, as it provides a simple means for preparing the covariance matrix for quantum datasets.

With that said, for classical datasets (living on classical devices), the global phase information is retained. The action of encoding this dataset into quantum states and preparing $\overline{\rho}$ on a quantum device will necessarily destroy the global phase information. In this case, extracting the principal eigenvectors of $\overline{\rho}$ will only correspond to PCA for a slightly different dataset, i.e., a symmetrized version of the original dataset. This idea was depicted in Fig.~\ref{fig:symmetrization}.

Nevertheless, Props.~\ref{prop_spectrum_evalues}, \ref{prop_spectrum_evector_error}, \ref{prop_spectrum_evectors} show that the spectrum obtained from $\overline{\rho}$ can be quite similar to that obtained from standard PCA. Proposition~\ref{prop_spectrum_evector_error} is a conceptually novel formulation, while Props.~\ref{prop_spectrum_evalues} and \ref{prop_spectrum_evectors} generalize the results in Ref.~\cite{cadima2009relationships} to the cases of complex random variables and non-uniform probability distributions over data points. (This generalization was non-trivial and required a different proof technique than that used in  Ref.~\cite{cadima2009relationships}). In particular, we find that the spectrum of $\overline{\rho}$ is essentially identical to that of the covariance matrix whenever one eigenvector is colinear with the mean vector $\vec{\mu}$. This condition is often satisfied by the first principal eigenvector of $\overline{\rho}$~\cite{cadima2009relationships}. One can see this geometrically in Fig.~\ref{fig:symmetrization}(b), where the principal axis of the ellipse is approximately colinear with the mean vector.

We further investigate these issues, for both classical and quantum datasets, in our numerical implementations in Sec.~\ref{sct_numerics}.

\section{Analysis of the Sampling Overhead}\label{sct_AnalysisOfSampling}
Here we detail the sampling cost incurred by our approach when being used as a subroutine for two near-term algorithms that can be used to diagonalize a density matrix. Namely, we explore the application of our approach within the variational quantum state diagonalization (VQSD)
algorithm \cite{larose2019variational} and the variational quantum state eigensolver (VQSE) algorithm \cite{cerezo2020variational} for performing quantum PCA. In addition, we briefly remark about the relevance of our approach to the original quantum PCA algorithm~\cite{lloyd2014quantum} in Sec.~\ref{sct_originalqPCA}.

In both VQSD and VQSE a cost function is minimized in order to find a circuit that diagonalizes a quantum state $\rho$. This circuit can then be used to return the approximate eigenvalues and eigenvectors of the state. Naturally, in our analysis, we will assume that the state to be diagonalized is the ensemble average density matrix, i.e., $\rho = \overline{\rho}$. After all, we connect $\overline{\rho}$ to the covariance matrix in our work here, and PCA involves diagonalizing the covariance matrix.

In VQSD, the cost function takes the form
\begin{equation}\label{eq:Cost_VQSD}
    \mathcal{C}_\textrm{VQSD}(U(\boldsymbol{\alpha}), \rho) = \Tr(\rho^{2}) - \Tr(\mathcal{Z}((U(\boldsymbol{\alpha})\rho U(\boldsymbol{\alpha})^{\dagger})^{2})\,
\end{equation}
where $\mathcal{Z}(\rho)$ is a quantum channel that dephases $\rho$ in the standard basis.  The cost function vanishes under the condition $\tilde{\rho} = \mathcal{Z}(\tilde{\rho})$, meaning $\tilde{\rho}$ is diagonal in the standard basis. Two copies of the state $\rho$ are required to compute the terms in the VQSD cost. The first term can be computed using the destructive swap test, whereas the second term can be evaluated using the Diagonalized Inner Product (DIP) test. 


For VQSE the cost function has the form
\begin{equation}\label{eq:Cost_VQSE}
    \mathcal{C}_\textrm{VQSE}(V(\boldsymbol{\theta}), \rho) = \Tr( V(\boldsymbol{\theta})\rho V^{\dagger}(\boldsymbol{\theta})H)
\end{equation}
where $H$ is a Hamiltonian that is non-degenerate over its $m$-lowest energy levels, assuming that one wishes to extract the $m$-largest eigenvalues of $\rho$. While this leaves much freedom in choosing $H$, one possible form is 
\begin{equation}\label{eqn_Hvqse}
    H = \mathbb{1} - \sum\limits_{i=1}^{m} q_{i} \ketbra{\boldsymbol{e}_{i}}{\boldsymbol{e}_{i}}\,
\end{equation}
and $q_{i}>0$ (such that $q_{i}>q_{i+1}$) and the $\ket{\boldsymbol{e}_{i}}$ are orthogonal states in the standard basis. In this case only one copy of the state $\rho$ is needed to evaluate the cost function. 

We first consider the case where every state in the dataset of interest is prepared deterministically and the statistics of the probability distribution are reproduced in a classical post processing step. Then we consider the case where we prepare each state by sampling from distribution of the dataset. 

Although the cost functions noted above are global and hence can have trainability issues \cite{cerezo2020cost}, we consider them here due to their simplicity, and we note the analysis we present can easily be extended to local versions of these cost functions.

\subsection{Deterministic state preparation}\label{sct_deterministic}

Let us consider a deterministic state preparation as follows. One can imagine rewriting the cost functions in \eqref{eq:Cost_VQSD} and \eqref{eq:Cost_VQSE} in terms of the states in the ensemble. In others words, we expand the density matrix as in Eq.~\eqref{eqn_EADM} as 
$\rho = \overline{\rho} = \sum_{i=1}^N p^{(i)} \dya{\psi^{(i)}}$, and we insert this expression into the cost functions. Now the cost functions are written entirely in terms of the states $\ket{\psi^{(i)}}$ that compose the ensemble. Hence one can estimate these cost functions using state preparation circuits for the individual $\ket{\psi^{(i)}}$ states.

When considering VQSE, the situation is very simple, as only one copy of the state $\rho$ is necessary to evaluate the cost function. Therefore, one only needs to compute the output from $\mathcal{O}(N)$ circuits at each cost function evaluation. Hence the number of state preparation circuits is linear in $N$, which is a relatively minor overhead.

For the case of VQSD, the cost function is quadratic in $\rho$. Hence the circuits required to evaluate each term in the cost function require two copies of $\rho$. When expanding the cost function in terms of the states that decompose $\rho$ (as in Eq.~\eqref{eqn_EADM}), there are $N^2$ terms in the expansion. Hence, each of the two terms of the VQSD cost in \eqref{eq:Cost_VQSD} requires $N^2$ state preparation circuits, as the first term is evaluated with the destructive swap test and the second term is evaluated with the DIP test. Therefore, the overall VQSD cost function can be evaluated using $\mathcal{O}(N^{2})$ state preparation circuits.

In summary, only a small amount of overhead in $N$ is required to integrate our method into the VQSE and VQSD algorithms. Namely, we require linear overhead for VQSE and quadratic overhead for VQSD.

\subsection{Sampling from the dataset}\label{sct_sampling}

While the overhead with the deterministic approach is small, this overhead can be reduced even further via random sampling with $M$ samples. Namely, the scaling with the number of states per cost function evaluation can be improved by sampling from the distribution of states making up the dataset, rather than preparing each state deterministically and classically combining the results. 

The value of the cost function is a scalar quantity. Therefore, we can explore the number of samples necessary to obtain a good estimate of the cost function using Hoeffding's inequality. We first explore how this applies in the case of VQSD. In this case the cost function consists of two terms which both have a non-linear dependence on $\rho$. However, one can bring each term together and consider them to be one observable evaluation. We show that using Hoeffding's inequality allows one to bound the deviation of the observable from its true value with the number of samples and therefore control the overall error in the cost function estimation.

\begin{proposition}
Suppose that we estimate $C_\textrm{VQSD}(U(\boldsymbol{\alpha}), \rho)$ for some unitary $U(\boldsymbol{\alpha})$ by randomly sampling from the dataset $\{\ket{\psi^{(i)}}\}$ that decomposes $\rho$. To ensure that the deviation from the true cost function value $C_\textrm{VQSD}(U(\boldsymbol{\alpha}), \rho)$ is smaller than $\varepsilon$ with probability $(1-\delta)$, it suffices for the number of samples to be
\begin{equation}
M= \frac{9\log(2/\delta)}{2\varepsilon^{2}}
\end{equation}
\end{proposition}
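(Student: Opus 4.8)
The plan is to apply Hoeffding's inequality to a suitable random variable whose expectation equals the cost function value. The key observation is that the VQSD cost in Eq.~\eqref{eq:Cost_VQSD} is quadratic in $\rho$, so when we write $\rho = \overline{\rho} = \sum_i p^{(i)} \dya{\psi^{(i)}}$, the cost becomes a double expectation over two independent samples drawn from the distribution $\{p^{(i)}\}$. Concretely, I would define a random variable $Z$ that depends on an independently sampled \emph{pair} of indices $(i,i')$, each drawn according to $\{p^{(i)}\}$, built from measurement outcomes of the destructive swap test and the DIP test applied to $\ket{\psi^{(i)}}$ and $\ket{\psi^{(i')}}$. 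The construction should ensure $\Ebb[Z] = \mathcal{C}_\textrm{VQSD}(U(\boldsymbol{\alpha}), \overline{\rho})$, so that the empirical average $\overline{Z}$ over $M$ sampled pairs is an unbiased estimator of the true cost.

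First I would establish the range of $Z$. The quantity $\Tr(\rho^2)$ and the dephased term $\Tr(\mathcal{Z}((U\rho U^\dagger)^2))$ are each bounded in $[0,1]$, and the corresponding single-shot estimators from the swap and DIP tests each take values in a bounded interval. The factor of $9$ appearing in the claimed bound strongly suggests that the combined single-sample observable $Z$ has range of width $3$ (since Hoeffding gives $M = (b-a)^2 \log(2/\delta)/(2\varepsilon^2)$, and $(b-a)^2 = 9$ reproduces the stated expression exactly). Thus the main quantitative step is to argue carefully that the difference of the two test outcomes, suitably combined into one observable per sampled pair, lies in an interval of length $3$. I would track how each term's single-shot outcome contributes: the swap-test estimator and the DIP-test estimator each live in $[-1,1]$ or $[0,1]$, and their difference lives in an interval whose width is $3$.

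With the range in hand, the rest is a direct invocation of Hoeffding's inequality: for $M$ i.i.d.\ copies of a random variable bounded in an interval of width $R$,
\begin{equation}
\Pro\!\left(|\overline{Z} - \Ebb[Z]| \geq \varepsilon\right) \leq 2\exp\!\left(-\frac{2M\varepsilon^2}{R^2}\right)\,.
\end{equation}
Setting the right-hand side equal to $\delta$, substituting $R^2 = 9$, and solving for $M$ yields
\begin{equation}
M = \frac{9\log(2/\delta)}{2\varepsilon^2}\,,
\end{equation}
which is exactly the claimed sample complexity.

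The main obstacle I anticipate is \emph{not} the Hoeffding step itself, which is routine, but rather the bookkeeping needed to fold the two quadratic terms of the VQSD cost into a \emph{single} bounded random variable per sampled pair, while (i) verifying the estimator is genuinely unbiased and (ii) pinning down the exact range so that the constant comes out to $9$ and not some other value. In particular, I would need to be careful that sampling a single pair $(i,i')$ and running both the swap test and the DIP test once on that pair produces a valid joint estimator, and that the variance/range accounting correctly yields width $3$ rather than, say, $2$ or $4$. Any subtlety about whether the two tests can share the same sampled pair of states, or must use independent samples, would be the delicate point to get right.
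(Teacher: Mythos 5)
Your proposal is correct and follows essentially the same route as the paper: both terms of the VQSD cost are folded into a single operator $H = \text{SWAP} - U^{\dagger}(\boldsymbol{\alpha})U_{D}^{\dagger}(\ketbra{0}{0}\otimes \mathbb{1})U_{D}U(\boldsymbol{\alpha})$ evaluated on a single sampled pair $\ket{\psi^{(i)}}\otimes\ket{\psi^{(j)}}$, whose spectrum lies in $[-2,1]$ (SWAP in $[-1,1]$, the conjugated projector in $[0,1]$), giving exactly the range width $3$ and the factor $9$ in Hoeffding. The delicate points you flag resolve just as you anticipated: the same sampled pair serves both terms, and unbiasedness is immediate from linearity of the trace.
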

\begin{proof}
As previously stated the two terms in the VQSD cost function can be exactly evaluated using a destructive swap test and a so called DIP test for each term \cite{larose2019variational}. To perform the swap test one measures the expectation value of the swap operator. For the DIP test one evolves the state by a CNOT ladder and measured the projector onto the all zero state on the first system. Therefore, the cost function can be written as
\begin{align}
    &C_\textrm{VQSD}(U(\boldsymbol{\alpha}), \rho) = \Tr(\left(\rho \otimes \rho\right)\text{SWAP}) \nonumber \\  & -  \Tr(\left(\rho \otimes \rho\right)U^{\dagger}(\boldsymbol{\alpha})U^{\dagger}_{D}(\ketbra{0}{0}\otimes \mathbb{1})U_{D}U(\boldsymbol{\alpha})),
\end{align}
where we have used the cyclic property of the trace. We can bring these two terms together and write the cost function as follows:
\begin{align}
    C_\textrm{VQSD}(U(\boldsymbol{\alpha}), \rho) = \Tr(\left(\rho \otimes \rho\right)H),
\end{align}
where $H = \text{SWAP} - U(\boldsymbol{\alpha})U^{\dagger}_{D}(\ketbra{0}{0}\otimes \mathbb{1})U_{D}U(\boldsymbol{\alpha})$.
Using $\rho=\sum\limits_{i}^{N}p^{(i)} \ketbra{\psi^{(i)}}{\psi^{(i)}}$ then leads to the expression,
\begin{align}
    &C_\textrm{VQSD}(U(\boldsymbol{\alpha}), \rho) = \nonumber\\ &\sum\limits_{i,j=1}^{N}p^{(i)}p^{(j)}\Tr\left(\big(\ketbra{\psi^{(i)}}{\psi^{(i)}} \otimes \ketbra{\psi^{(j)}}{\psi^{(j)}}\big)H\right) \ .
\end{align}
Relabeling indices from pairs of $i,j$ to $k$ and rewriting the above equation gives
\begin{align}
    &C_\textrm{VQSD}(U(\boldsymbol{\alpha}), \rho) = \sum\limits_{k = 1}^{N^{2}}p_{k}\Tr\left(\ketbra{\tilde{\psi}^{(k)}}{\tilde{\psi}^{(k)}}H\right).
\end{align}
Therefore, we can consider sampling from the set of states $\tilde{\rho} = \sum_{k=1}^{N^{2}} p^{(k)} \ketbra{\tilde{\psi}^{(k)}}{\tilde{\psi}^{(k)}}$ and estimating the value of the cost function over the sampled states. First consider the quantity 
\begin{equation}
  S_{M} = \frac{1}{M}\sum_{m=1}^{M}X^{(m)} \ ,
\end{equation}
where 
\begin{equation}
    X^{(m)} = \Tr\left(\ketbra{\tilde{\psi}^{(m)}}{\tilde{\psi}^{(m)}}H\right).
\end{equation}
Let us note that every term $X^{(m)} \in [-2,1]$. This follows from the fact that the spectrum of SWAP is in $[-1,1]$ while the spectrum of $\ketbra{0}{0}\otimes \mathbb{1}$ is in $[0,1]$, and hence the overall spectrum of $H$ is in $[-2,1]$. Furthermore, noting that $\mathbb{E}(S_{M}) = C_\textrm{VQSD}(U(\boldsymbol{\alpha}), \rho)$, we can apply Hoeffding's inequality,
\begin{equation}
    \mathcal{P}(|S_{M} - C_\textrm{VQSD}(U(\boldsymbol{\alpha}), \rho)| \geq \varepsilon) \leq 2\exp\left(\frac{-2\varepsilon^{2} M}{9}\right). 
\end{equation}
The probability of the estimate $S_{M}$ deviating from the true value $C_\textrm{VQSD}(U(\boldsymbol{\alpha}), \rho)$ by more than some value $\varepsilon$ decreases exponentially with the number of samples $M$. Defining this probability as $\delta$ leads to an estimate of the lower bound on the number of samples to bound $\varepsilon$ with confidence $1-\delta$,
\begin{equation}
    M = \frac{9\log(2/\delta)}{2\varepsilon^{2}}.
\end{equation}
\end{proof}
A similar approach can be taken when considering VQSE. In that case the cost function only consists of the evaluation of one observable and one copy of the state $\rho$. This makes the analysis simpler, but the conclusion is the same. Furthermore, $|C_\textrm{VQSE}| \leq \| H \|_{\infty}$ with $H$ given by \eqref{eqn_Hvqse}, and this Hamiltonian norm ends up appearing in the exponent in Hoeffding's inequality. This leads to a similar proposition for the case of VQSE:
\begin{proposition}
Suppose that we estimate $C_\textrm{VQSE}(U(\boldsymbol{\alpha}), \rho)$ for some unitary $U(\boldsymbol{\alpha})$ by randomly sampling from the dataset $\{\ket{\psi^{(i)}}\}$ that decomposes $\rho$. To ensure that the deviation from the true cost function value $C_{VQSE}(U(\boldsymbol{\alpha}), \rho)$ is smaller than $\varepsilon$ with probability $(1-\delta)$, it suffices for the number of samples to be
\begin{equation}
M= \frac{2 \| H \|^{2}_{\infty}\log(2/\delta)}{\varepsilon^{2}}
\end{equation}
with $H$ given by \eqref{eqn_Hvqse}.
\end{proposition}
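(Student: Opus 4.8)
The plan is to mirror the preceding VQSD argument but to exploit the fact that the VQSE cost is linear in $\rho$ and requires only a single copy, which removes the index-pairing step that appeared in the earlier proof. First I would insert the ensemble decomposition $\rho = \overline{\rho} = \sum_{i=1}^N p^{(i)} \dya{\psi^{(i)}}$ into Eq.~\eqref{eq:Cost_VQSE} and use the cyclic property of the trace to write
\begin{equation}
    C_\textrm{VQSE}(V(\boldsymbol{\theta}),\rho) = \sum_{i=1}^N p^{(i)}\, \mte{\psi^{(i)}}{V^{\dagger}(\boldsymbol{\theta}) H V(\boldsymbol{\theta})}\,.
\end{equation}
This exhibits the cost as an average, over the distribution $\{p^{(i)}\}$, of the per-state quantities $\mte{\psi^{(i)}}{V^{\dagger} H V}$, so that sampling a state $\ket{\psi^{(m)}}$ from the dataset and measuring $V^{\dagger} H V$ yields an unbiased estimator of the cost.

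Next I would introduce the random variable $X^{(m)} = \mte{\psi^{(m)}}{V^{\dagger}(\boldsymbol{\theta}) H V(\boldsymbol{\theta})}$ together with the empirical mean $S_M = \frac{1}{M}\sum_{m=1}^M X^{(m)}$, noting that $\mathbb{E}(S_M) = C_\textrm{VQSE}$. The central quantitative step is to bound the range of each $X^{(m)}$. Since $V^{\dagger} H V$ is unitarily equivalent to $H$, it shares its spectrum, so every $X^{(m)}$ is an expectation value of an operator whose eigenvalues lie in $[-\|H\|_\infty, \|H\|_\infty]$; hence $X^{(m)} \in [-\|H\|_\infty, \|H\|_\infty]$, an interval of width $2\|H\|_\infty$.

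With this range in hand I would apply Hoeffding's inequality to the i.i.d.\ samples $X^{(1)},\dots,X^{(M)}$, obtaining
\begin{equation}
    \mathcal{P}\!\left(|S_M - C_\textrm{VQSE}| \geq \varepsilon\right) \leq 2\exp\!\left(\frac{-\varepsilon^2 M}{2\|H\|_\infty^2}\right)\,.
\end{equation}
Setting the right-hand side equal to $\delta$ and solving for $M$ then yields $M = 2\|H\|_\infty^2 \log(2/\delta)/\varepsilon^2$, as claimed. The only genuine subtlety---the ``hard part''---is pinning down the correct concentration interval: whereas the VQSD observable had the explicit range $[-2,1]$ fixed by the spectra of SWAP and the projector, here the relevant scale is instead set by $\|H\|_\infty$, and it is precisely this factor (squared) that migrates into the exponent and hence into the final sample count. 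Everything else is a routine specialization of the preceding VQSD computation.
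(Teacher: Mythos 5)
Your proposal is correct and follows essentially the same route as the paper, which itself only sketches the VQSE case as ``a similar approach'' to the VQSD proof: decompose $\rho$ into the ensemble, treat each sampled expectation value $\mte{\psi^{(m)}}{V^{\dagger}HV}$ as a bounded random variable with range set by $\|H\|_{\infty}$, and apply Hoeffding's inequality. Your explicit bound $X^{(m)}\in[-\|H\|_{\infty},\|H\|_{\infty}]$ and the resulting exponent $\varepsilon^{2}M/(2\|H\|_{\infty}^{2})$ reproduce exactly the sample count $M = 2\|H\|_{\infty}^{2}\log(2/\delta)/\varepsilon^{2}$ claimed in the proposition, so you have in effect supplied the details the paper leaves implicit.
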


One key feature of this result is the absence of $N$, the total number of states in the dataset. Therefore, we can conclude that the number of samples necessary in order to obtain a good cost function estimate does not scale with the number of datapoints that make up the dataset. Sampling to obtain good cost function estimates then in turn would lead to a successful optimization, and therefore accurate determination of the eigenvalues and eigenvectors corresponding to the principal components. 

The above analysis can also be extended to the number of shots needed to obtain an accurate cost function value. If one considers the limiting case where only one shot is used for the evaluation of the cost function for each state the same argument given above applies. This leads to the conclusion that only $\frac{9\log(2/\delta)}{2\varepsilon^{2}}$ shots are necessary to compute the cost function with accuracy $\varepsilon$ and confidence $\delta$ for the case of $C_\textrm{VQSD}$. Each shot is taken while also randomly sampling from the dataset $\{\ket{\psi^{(i)}}\}$. A similar result follows for $C_\textrm{VQSE}$.

Overall, when looking at using our method to prepare the covariance matrix as a subroutine in VQSD and VQSE we find the sampling overhead necessary to still obtain an optimization close to the exact case is favorable. Indeed, we find essentially no dependence on the total size of the dataset.

\subsection{Remark about original quantum PCA algorithm}\label{sct_originalqPCA}

We emphasize that we focus on VQSD and VQSE instead of the original quantum PCA algorithm~\cite{lloyd2014quantum} because the latter is a long-term algorithm and we are aiming at a near-term approach to quantum PCA. The simplicity of our method is ideally suited to NISQ (Noisy Intermediate Scale Quantum) devices. Hence our method partners well with other NISQ algorithms (for state diagonalization).

Nevertheless, one can apply our approach (of preparing the ensemble average density matrix $\overline{\rho}$) in the context of the original quantum PCA algorithm as well. Recall that their algorithm uses $c$ copies of $\overline{\rho}$ in order to approximately implement the $e^{-i \overline{\rho} t}$ gate in the context of quantum phase estimation (QPE). Here, one also feeds $\overline{\rho}$ as the input state into the QPE circuit. Their algorithm requires $c\in \mathcal{O}(1/\epsilon^3)$ copies of $\overline{\rho}$ in order to extract its eigenvalues and eigenvectors with accuracy~$\epsilon$.

One can employ a deterministic approach to prepare $\overline{\rho}$, as discussed in Sec.~\ref{sct_deterministic}. Operationally speaking, this approach requires the simultaneous decomposition of both the input state $\overline{\rho}$ and the copies of $\overline{\rho}$ that are used to approximate the $e^{-i \overline{\rho} t}$ gate implemented in the QPE circuit. Let us now discuss how this decomposition leads to the expected (i.e., correct) output, due to the linearity of the operations involved. In this setting, one would: (1) choose a state from the dataset (i.e., from the decomposition of $\overline{\rho}$), (2) feed this state into the QPE circuit, (3) estimate observable expectation values on the resulting state (to characterize the eigenvalues and eigenvectors), and (4) average these observable expectation values over all of the states in the dataset. By the linearity of the QPE unitary and of the trace, the average of the expectation values is the same as the expectation value of the average, and hence this method reproduces the correct results. In addition, as previously mentioned one would also need to choose states from the dataset for each ancilla system that is used to approximate the $e^{-i \overline{\rho} t}$ gate using exponential swap gates followed by a partial trace over the ancilla. Once again, due to the linearity of the exponential swap gates and of the partial trace, averaging over the dataset gives the correct result.

In this deterministic setting, one would need $N$ state preparations for each copy of $\overline{\rho}$, i.e., for the input system to QPE and for the ancilla systems using to approximate the $e^{-i \overline{\rho} t}$ gate. If a total of $c$ copies of $\overline{\rho}$ are employed, then one would potentially need $N^c$ state preparations with this approach.

We remark that a random sampling approach, instead of a deterministic approach,  to preparing each copy of $\overline{\rho}$  could alternatively be used. See Sec.~\ref{sct_sampling} for further discussion of this approach. Although a detailed analysis of random sampling in this context is beyond the scope of this work, we do believe that random sampling would likely significantly reduce the number of state preparations required, as compared to the deterministic case. This intuition arises from Hoeffding's bounds, which implies that an estimator will concentrate about its mean value as one increases the number of samples, and this concentration guarantee is essentially independent of the dataset size $N$.

\section{Numerical implementations}\label{sct_numerics}

Here we perform PCA (i.e., diagonalizing $Q$) and simulate quantum PCA (i.e., diagonalizing $\overline{\rho}$) on two different datasets. The first dataset is classical and is the famous MNIST dataset of handwritten digits. The second dataset is quantum dataset of molecular ground states for various interatomic distances.

\subsection{MNIST implementation}

\begin{figure}[t]
    \centering
    \includegraphics[width =\columnwidth]{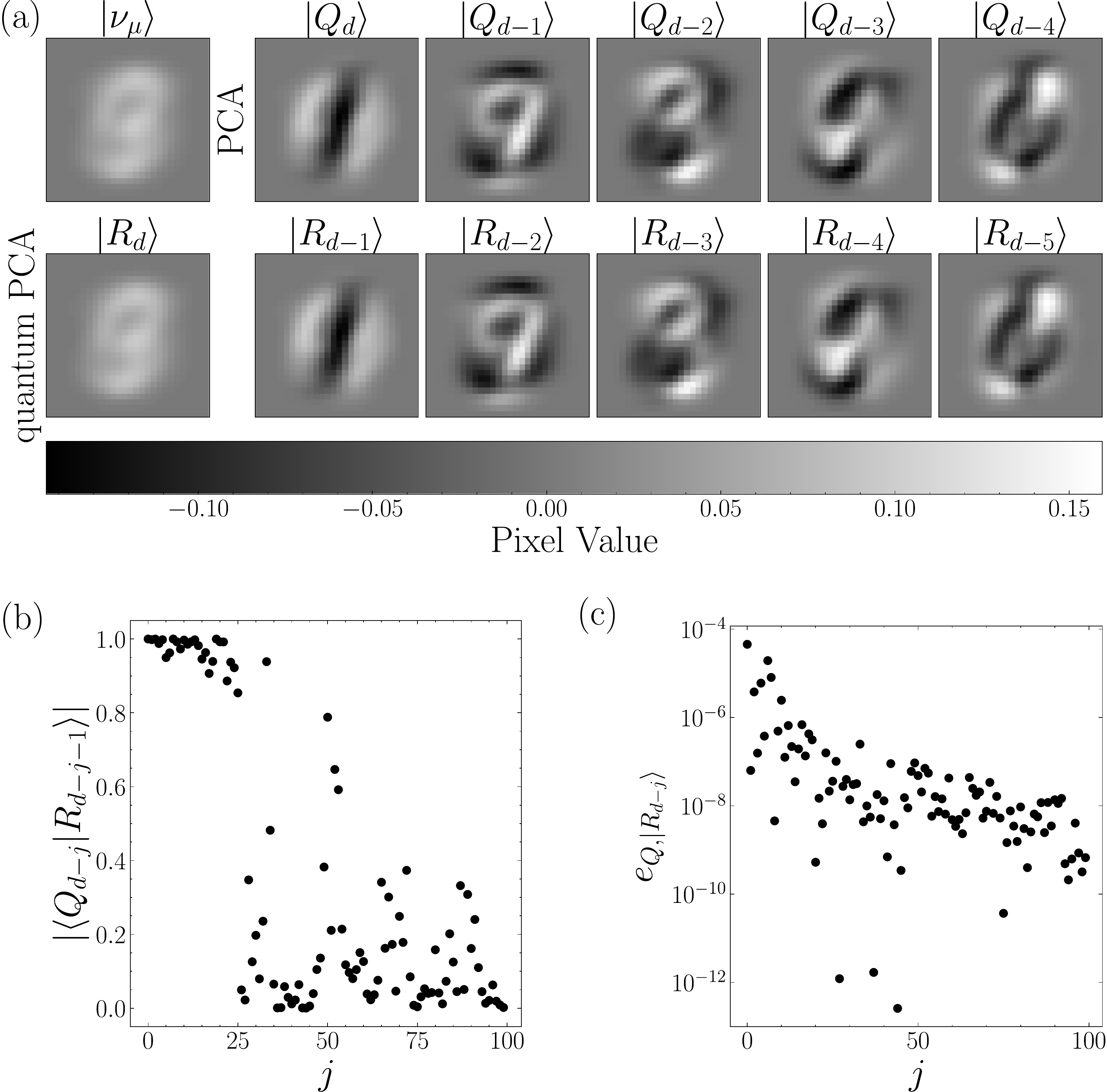}
    \caption{\textbf{Eigenvectors for the MNIST dataset of handwritten digits.} We perform standard PCA (diagonalizing $Q$) and quantum PCA (diagonalizing $\overline{\rho}$). (a) The images of the first $5$ principal components for PCA and the first $6$ for quantum PCA are shown.  The first principal component for quantum PCA is close to (has a large overlap with) the mean vector $\ket{v_{\vec{\mu}}}$, which is also shown. Also, one can visually see that $\ket{Q_{d-j}}$ is quite similar to $\ket{R_{d-j-1}}$ for these eigenvectors. (b) We plot the magnitude of the overlap of the principal components calculated from PCA and quantum PCA, namely, $\ket{Q_{d-j}}$ and $\ket{R_{d-j-1}}$. This overlap remains large when $j$ is less than 25. (c) Finally, we show the eigenvector error, which quantifies how far the eigenvectors of $\overline{\rho}$ are from being eigenvectors of $Q$.}
    \label{fig:MNIST_eigenvectors}
\end{figure}

\begin{figure}[t]
    \centering
    \includegraphics[width =\columnwidth]{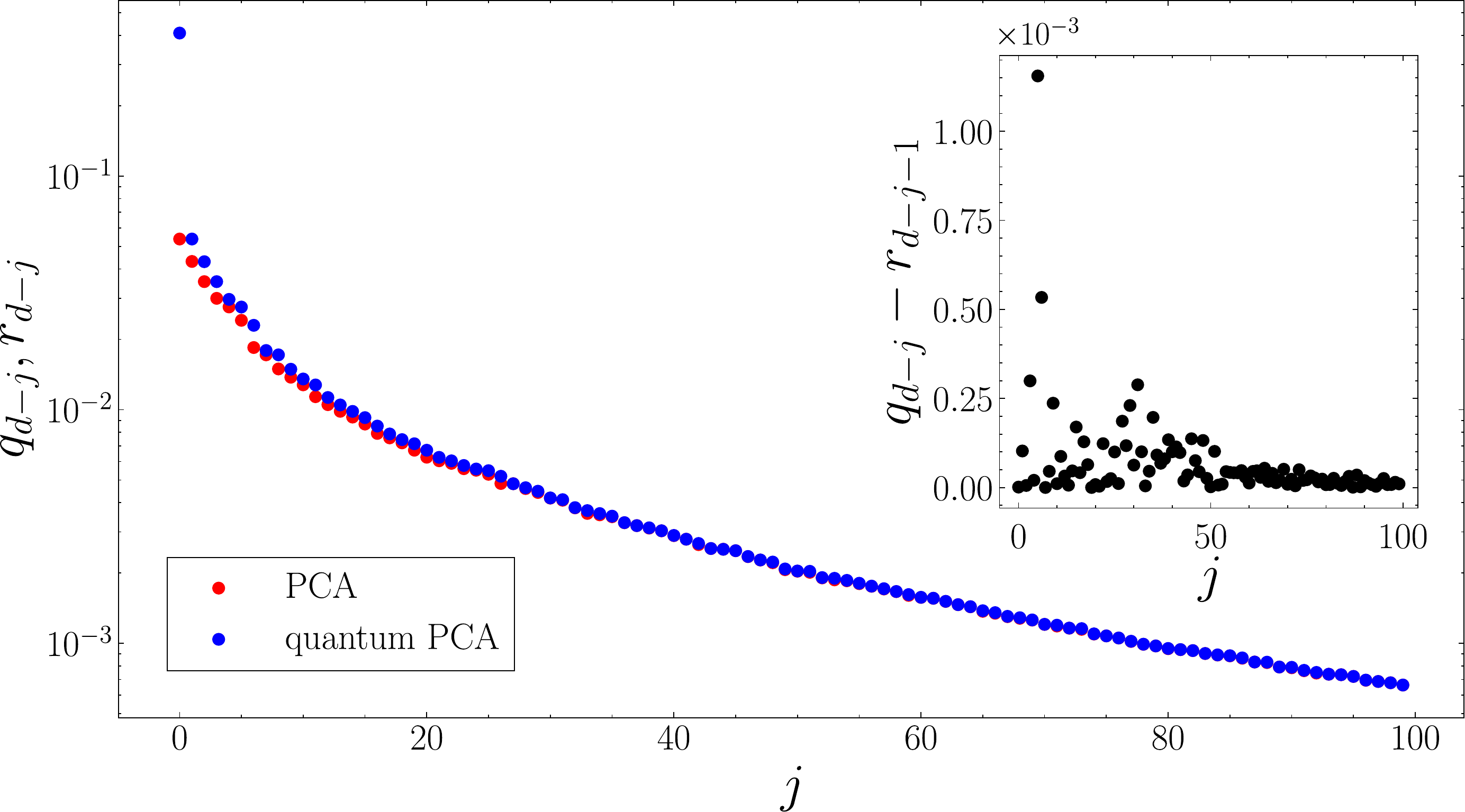}
    \caption{\textbf{Eigenvalues for the MNIST dataset of handwritten digits.} We perform standard PCA (diagonalizing $Q$) and quantum PCA (diagonalizing $\overline{\rho}$). The eigenvalues calculated using PCA and quantum PCA are plotted. Note that $q_{d-j}$ is quite similar to $r_{d-j-1}$ for these principal components. In the inset we plot the difference between the shifted eigenvalues $q_{d-j}$ and $r_{d-j-1}$ calculated using PCA and quantum PCA, respectively.}
    \label{fig:MNIST_eigenvalues}
\end{figure}

From the MNIST dataset, we randomly select $5000$ instances of each digit from $0$ to $9$. We vectorize the $28 \times 28$ grey-scaled images leading to vectors $\vec{y}^{(i)}$. This gives $d=784$ features for standard PCA, and for our simulation of quantum PCA we trivially embed the data in a $d=2^{10} = 1024$ dimensional feature space corresponding to the Hilbert space of 10 qubits. Each vector is normalized such that $(\vec{y}^{(i)})^{T}\vec{y}^{(i)}=1$ This results in a data set $D = \{\vec{y}^{(i)}\}_{i=1}^{N}$ with $N=50000$. We assume a uniform probability distribution so that the ensemble is $E = \{1/N,\vec{y}^{(i)} \}$.  

In our numerics, we perform PCA and we simulate quantum PCA for this dataset. We then compare the principal components produced by each approach. For standard PCA, we diagonalize the covariance matrix $Q$ formed from the vectors in $D$. The top $n$ principal components are the eigenvectors corresponding to the $n$ largest eigenvalues. For our simulation of quantum PCA, we prepare $\bar{\rho}$ as outlined in the text above. We then diagonalize $\bar{\rho}$ to give the quantum principal components.

\begin{figure}[t]
    \centering
    \includegraphics[width =1\columnwidth]{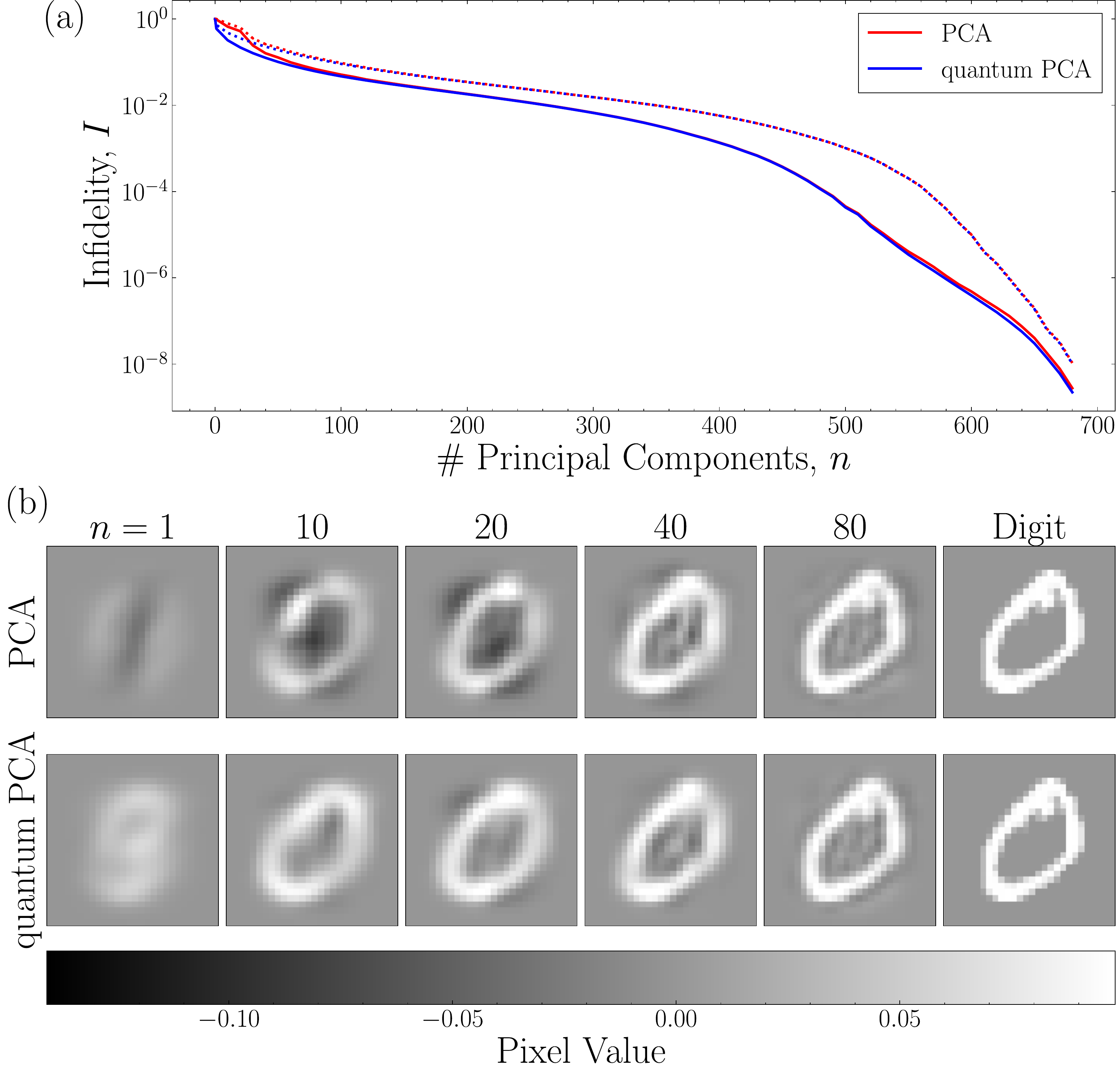}
    \caption{\textbf{Projection onto $n$ principal components for PCA and quantum PCA of the MNIST dataset.} (a) The plot shows the median (solid) and upper 90\% interval (dotted) infidelity ($I$) values between the projected state constructed using $n$ principal components calculated from PCA and quantum PCA. The median and upper interval are calculated over all $50000$ digits. (b) We show how the images are reproduced when using $n =1 $ to $n = 80$ principal components from PCA and quantum PCA in the upper and lower rows respectively. }
    \label{fig:infidelity}
\end{figure}

For verification purposes, we also performed standard PCA on a symmetrized dataset
\begin{equation}
    \tilde{D} = D\cup (-D) = \{\vec{y}^{(i)}\}_{i=1}^{N} \cup \{ - \vec{y}^{(i)}\}_{i=1}^{N}\,.
\end{equation}
We verified that we obtain the same spectrum from performing PCA with this symmetrized dataset as we obtain with our simulation of quantum PCA. This is expected as both scenarios effectively involve a symmetrized dataset.

\subsubsection{Eigenvalues and Eigenvectors}

We now discuss the spectrum that we obtained in the two cases. In Fig.~\ref{fig:MNIST_eigenvectors}(a) we show the eigenvectors associated with the first $5$ principal components for PCA and the first $6$ for quantum PCA. We see an intriguing correspondence between the eigenvectors obtained via standard and quantum PCA. Specifically, $\ket{Q_{d-j}}$ appears to be very similar to $\ket{R_{d-j-1}}$. Indeed, one can see in Fig.~\ref{fig:MNIST_eigenvectors}(b) that the overlap $|\ip{Q_{d-j}}{R_{d-j-1}}|^{2}$ is close to one for $j< 25$. For larger $j$ this overlap is smaller, however this does not mean that the $\ket{R_{d-j-1}}$ states are not approximate eigenvectors of $Q$. To clarify this point, we also plot the eigenvector error in Fig.~\ref{fig:MNIST_eigenvectors}(c). For all values of $j$ shown, the eigenvector error is below $10^{-4}$. This suggests that the eigenvectors of $\overline{\rho}$ serve as approximate eigenvectors for $Q$, for this dataset.

We see a similar pattern in the eigenvalues, shown in Fig.~\ref{fig:MNIST_eigenvalues}. Excluding the largest eigenvalue from quantum PCA, the eigenvalues match very well for the two methods. More specifically, $r_{d-j-1}$ is very close to $q_{d-j}$, and this is shown in more detail in the inset of Fig.~\ref{fig:MNIST_eigenvalues}.

The close correspondence between both the eigenvectors and the eigenvalues for the two methods is likely a consequence of Prop.~\ref{prop_spectrum_evectors}. The assumption in Prop.~\ref{prop_spectrum_evectors} is almost satisfied, i.e., one of the eigenvectors of $\overline{\rho}$ is close to the mean vector. Specifically, the first
principal component of $\overline{\rho}$, which is $\ket{R_d}$ and is displayed in Fig.~\ref{fig:MNIST_eigenvectors}(a), has an infidelity of roughly $2.8 \times 10^{-4}$ with the mean vector $\ket{v_{\vec{\mu}}}$. The $\ket{R_d}$ eigenvector appears to capture the bias of the dataset, i.e., the fact that the dataset is biased towards white colored images. The fact that $\ket{R_d}$ is close to the mean vector suggests that the spectral decomposition in \eqref{eqn_spectrum_inheritance} is almost valid. Hence the spectra of $\overline{\rho}$ and $Q$ are close to matching.

\subsubsection{Principal component projections}

In order to assess how the two methods for calculating the principal components perform, we construct a projected image by projecting each image onto a reduced subspace
\begin{equation}
\ket{\hat{y}_{n}^{(i)}} = \sum_{j=0}^{n-1} \braket{\chi_{d-j}}{y^{(i)}} \ket{\chi_{d-j}} ,
\end{equation}
where $\ket{\hat{y}_{n}^{(i)}}$ is the projected vector using $n$ principal components $\ket{\chi_{d-j}}$ calculated from either PCA or quantum PCA. We can explore the infidelity between the projected vector and the original as 
\begin{equation}\label{eqn_infidelity}
I = 1 - |\braket{\hat{y}_{N_{c}}^{(i)}}{y^{(i)}}|. 
\end{equation}
In Fig.~\ref{fig:infidelity}(a) we show the median and 90\% interval of the infidelity over every image as a function of $n$ for the classical and quantum cases. The performance of PCA and quantum PCA is similar with roughly $100$ components necessary for a median infidelity of $0.1$. Therefore, despite the differences in the calculated eigenvalues and eigenvectors, both methods can be used to accurately compress the data on average.

In Fig.~\ref{fig:infidelity}(b) we show how one projected image in the MNIST dataset appears visually as we increase the number of principal components used. There is a noticeable difference in the $n=1$ case, due to $\ket{R_d}$ being quite different from $\ket{Q_d}$. However, by $n = 80$ the two projected images appear visually similar and also close to the true image. It therefore appears that quantum PCA is a successful surrogate for PCA, for this dataset.

\subsection{Molecular ground state implementation}

We simulate the task of performing quantum PCA on molecular ground states of the H${}_{2}$ molecule in the $6$-$31$g basis and the BeH${}_{2}$ in the sto-$3$g basis. These implementations require system sizes of 8 qubits and 14 qubits, respectively, corresponding to feature space dimensions of $d = 256$ and $d = 16384$. The ground state is calculated by classically simulating the variational quantum eigensolver (VQE)~\cite{peruzzo2014variational} for $401$ equally spaced interatomic distances $r \in [0.3, 2.3]$. 

We find that the principal components and eigenvalues calculated using standard PCA and quantum PCA are almost identical in this case. This is due to the fact that our simulations of VQE produce ground states with random global phases, and this naturally results in a dataset that is approximately centered, especially as one increases the number of datapoints. Hence, in this case, Prop.~\ref{prop_centered} applies, and diagonalizing $\overline{\rho}$ is equivalent to diagonalizing $Q$. (This point would be irrelevant if we assume that the dataset is truly generated on a quantum device, rather than through our classical simulations, since global phase is unphysical and then Prop.~\ref{prop_quantumdataset} would apply.)

Recall from Eq.~\eqref{eqn_infidelity} that the infidelity quantifies the inability to recover the original data from the compressed data. In Fig.~\ref{fig:H2_BeH2}(a) we show the median and 90\% interval of the infidelity over every ground state as a function of $n$ for the two molecules. The projected ground states have a very low infidelity with the actual ground states for low numbers of principal components, especially in the case of H${}_2$. Therefore, there exists an accurate, more efficient representation of these states, which can be calculated with quantum PCA. In Fig.~\ref{fig:H2_BeH2}(b) we show the infidelity $I$ as a function of the interatomic distance $r$ for projected states calculated with $n=1,2,3,4$ principal components for both molecules. One can see that the curves monotonically decrease with $n$.

\begin{figure}
    \centering
    \includegraphics[width =1\columnwidth]{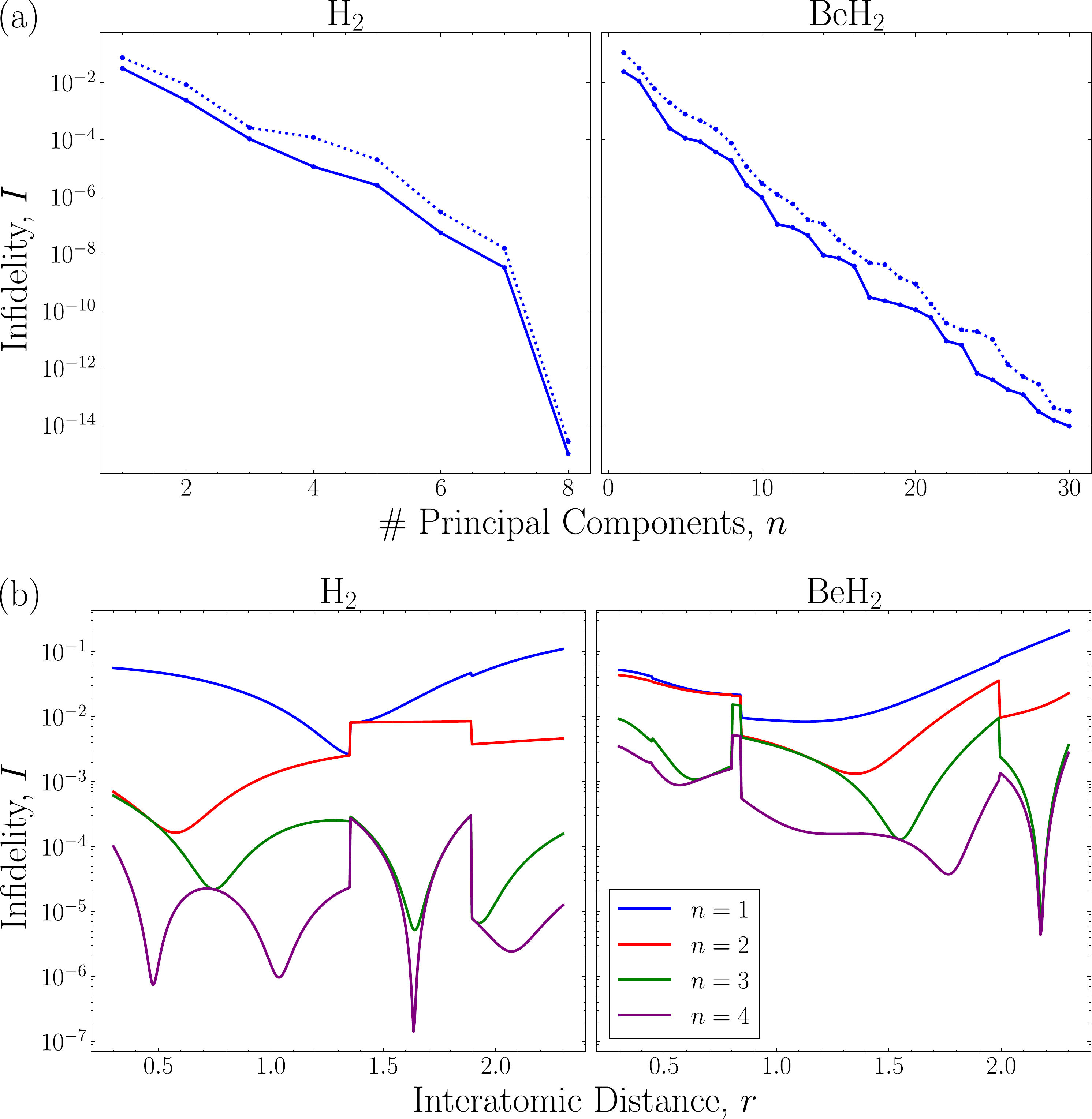}
    \caption{\textbf{Quantum PCA performed on ground states of molecules at different interatomic distances.} (a) The plot shows the infidelity ($I$) between the original state and the projected state constructed using $n$ principal components calculated by simulating quantum PCA. The median (solid line) and upper 90\% interval (dotted line)  are calculated over all $401$ interatomic distances $r$. (b) The infidelity is plotted as a function of interatomic distance $r$ for projected states calculated with $n=1,2,3,4$ principal components. }
    \label{fig:H2_BeH2}
\end{figure}

\section{Discussion}

Recently there is renewed interest in quantum algorithms for PCA. The history here is quite interesting. Lloyd et al.'s quantum PCA algorithm was proposed in 2014 and highlighted a potential exponential speedup over classical algorithms~\cite{lloyd2014quantum}, and this also led to some near-term proposals for quantum PCA~\cite{larose2019variational,cerezo2020variational,verdon2019quantum,ezzell2022quantum}. However, a spooky paper posted on Halloween of 2018 by Tang presented a ``dequantized'' classical algorithm that achieved the same asymptotic scaling as quantum PCA~\cite{tang2021quantum}, suggesting that quantum speedup for quantum PCA was a false promise. But this was not the end of the story. In December 2021, a team from Caltech and Google posted two papers~\cite{cotler2021revisiting,huang2021quantumadvantage} arguing that exponential quantum speedup is possible for quantum PCA, and even for near-term algorithms, since (as they argue)  dequantized classical algorithms are artificially given too much power via their mode of access to quantum state amplitudes. As a consequence of this latest work, we are left with the exciting possibility that exponential quantum speedup remains possible for quantum PCA, particularly for analysis of quantum data. It is worth remarking that the Tang's dequantization results still apply to analysis of classical data~\cite{cotler2021revisiting}, although this does not preclude the possibility of modest (e.g., constant factor) quantum speedups for classical data analysis~\cite{arrazola2019quantum}.

This interesting historical trajectory for quantum PCA makes our work even more important. A crucial piece of the puzzle was missing in this field, and that was a method for preparing the covariance matrix, given a dataset of quantum states. We have made major strides towards filling this gap, by proposing the ensemble average density matrix as a surrogate for the covariance matrix. We argued that this surrogate was equal to the covariance matrix for arbitrary quantum datasets or for centered classical datasets. Given the potential for exponential speedup with quantum PCA for quantum datasets~\cite{cotler2021revisiting,huang2021quantumadvantage}, our results for these datasets are technologically important. Therefore we placed significant emphasize on PCA for quantum datasets in this work, providing a detailed discussion of complex random variables (see Sec.~\ref{sct_background}) and showcasing an implementation for molecular ground states (see Sec.~\ref{sct_numerics}). 

On the other hand, for uncentered classical datasets, we showed that diagonalizing this surrogate matrix corresponds to ``PCA without centering'', or PCA for a symmetrized version of the dataset as shown in Fig.~\ref{fig:symmetrization}. We nevertheless derived results that bound the deviation of the spectrum obtained with our method from the true PCA spectrum, indicating a close correspondence with true PCA. (At the mathematical level, our results here generalize and extend those of Ref.~\cite{cadima2009relationships}.) Hence, we believe that our method will be useful even for uncentered classical datasets. 

We remark that a recent experimental quantum PCA implementation~\cite{xin_experimental_2021} employed a non-scalable method for covariance matrix preparation. They classically optimized over circuits to prepare the covariance matrix, which has exponential scaling with problem size. Moreover, that approach would be unnatural for quantum datasets, since one would have to first readout the quantum states, leading to an additional source of exponential scaling. Thus, our approach of preparing $\overline{\rho}$ fills an important gap in the literature, towards scalable covariance matrix preparation, especially for quantum datasets~\footnote{For classical datasets, the scaling of our method will be determined by the scaling of the amplitude encoding step, which is still an active area of research~\cite{grover2000synthesis,grover2002creating,plesch2011quantum,schuld2018supervised,sanders2019black,nakaji2021approximate,marin2021quantum,zoufal2019quantum}}.

Natural future work would be to actually implement our method on real quantum hardware. Combining our method for preparing the covariance matrix with other near-term methods for extracting the spectrum~\cite{larose2019variational,cerezo2020variational,verdon2019quantum,ezzell2022quantum} would lead to a near-term approach for quantum PCA. Indeed the nice feature of our method is how simple and easy-to-implement it is on near-term quantum hardware.

\begin{acknowledgments}

We thank Kunal Sharma for helpful and insightful discussions. MHG and PJC were supported by the U.S. Department of Energy (DOE), Office of Science, Office of Advanced Scientific Computing Research, under the Quantum Computing Application Teams (QCAT) program. MC was supported by the Laboratory Directed Research and Development (LDRD) program of Los Alamos National Laboratory (LANL) under project number 20210116DR. LC was supported by the LDRD program of LANL under project number 20200056DR. MC and PJC were also initially supported by the LANL ASC Beyond Moore's Law project.

\end{acknowledgments}

\bibliography{quantum.bib}

\appendix

\section{Weyl's theorem for eigenvalues}\label{Appendix_Weyl}

For our purposes, we will need a theorem from Weyl about the eigenvalues of Hermitian matrices. This theorem is sometimes called Weyl's inequality. A detailed discussion and proof of this theorem can be found in the textbook of Horn and Johnson~\cite{horn2012matrix}. We repeat the theorem here, as it is stated in Ref.~\cite{horn2012matrix}.

\begin{lemma}
Let $A$ and $B$ be Hermitian $d\times d$ matrices. Let the respective eigenvalues of $A$, $B$, and $A+B$ be $\{\lambda_j(A)\}_{j=1}^d$, $\{\lambda_j(B)\}_{j=1}^d$, and $\{\lambda_j(A+B)\}_{j=1}^d$. Suppose that the eigenvalue in these sets are listed in non-decreasing order. Then
\begin{equation}\label{eqn_Weyl1}
    \lambda_j(A+B) \leq \lambda_{j+k}(A) + \lambda_{d-k}(B)
\end{equation}
for each $j = 1, ..., d$ and $k = 0, 1,..., d-j$. In addition, 
\begin{equation}\label{eqn_Weyl2}
    \lambda_{j-k+1}(A)+\lambda_{k}(B) \leq \lambda_{j}(A+B)
\end{equation}
for each $j = 1, ..., d$ and $k=1,..., j$.
\end{lemma}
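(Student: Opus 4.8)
The plan is to prove both Weyl inequalities using the Courant--Fischer variational characterization of the eigenvalues of a Hermitian matrix, together with a simple dimension-counting argument. Recall that for a Hermitian $d\times d$ matrix $A$ with eigenvalues listed in non-decreasing order, Courant--Fischer gives the min-max formula
\begin{equation}
    \lambda_j(A) = \min_{\dim S = j}\; \max_{\substack{x\in S\\ \|x\|=1}} \langle x, Ax\rangle\,,
\end{equation}
where the minimum runs over all $j$-dimensional subspaces $S\subseteq\mathbb{C}^d$. I would take this formula as the starting point, since it is standard and can be cited directly from Ref.~\cite{horn2012matrix}.

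For the first inequality~\eqref{eqn_Weyl1}, I would fix $j$ and $k$ and construct an explicit $j$-dimensional trial subspace on which both $\langle x, Ax\rangle$ and $\langle x, Bx\rangle$ are simultaneously controlled. Let $\{u_i\}$ and $\{v_i\}$ be orthonormal eigenbases of $A$ and $B$, ordered by non-decreasing eigenvalue. Set $U = \mathrm{span}\{u_1,\dots,u_{j+k}\}$ and $V = \mathrm{span}\{v_1,\dots,v_{d-k}\}$, so that $\langle x, Ax\rangle \leq \lambda_{j+k}(A)$ for every unit $x\in U$ and $\langle x, Bx\rangle \leq \lambda_{d-k}(B)$ for every unit $x\in V$. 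Since $\dim U + \dim V = (j+k)+(d-k) = d+j$, the intersection $U\cap V$ has dimension at least $j$, hence contains a $j$-dimensional subspace $S$. For any unit $x\in S$ one then has
\begin{equation}
    \langle x,(A+B)x\rangle = \langle x,Ax\rangle + \langle x,Bx\rangle \leq \lambda_{j+k}(A)+\lambda_{d-k}(B)\,.
\end{equation}
Feeding this particular $S$ into the min-max formula for $\lambda_j(A+B)$ immediately yields~\eqref{eqn_Weyl1}.

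For the second inequality~\eqref{eqn_Weyl2}, rather than repeating the argument with the dual max-min characterization, I would obtain it for free by applying~\eqref{eqn_Weyl1} to the matrices $-A$ and $-B$ and invoking the elementary identity $\lambda_j(-A) = -\lambda_{d-j+1}(A)$ (negation reverses the ordering of the spectrum). Substituting this into the bound on $\lambda_j(-A-B)$ and reindexing via $j\mapsto d-j+1$, $k\mapsto k-1$ converts the resulting upper bound into the claimed lower bound~\eqref{eqn_Weyl2}, and one checks that the index $k$ then ranges over exactly $1,\dots,j$.

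The only delicate point, and the step I would be most careful with, is the bookkeeping: verifying that the dimension count $(j+k)+(d-k)-d = j$ is valid over the full stated range $k=0,\dots,d-j$ (so that $U$ and $V$ are genuinely defined and the intersection argument applies), and confirming that the reindexing in the second part reproduces precisely the index ranges claimed. The analytic content---Courant--Fischer together with additivity of the quadratic form---is otherwise routine.
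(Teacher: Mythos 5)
Your proof is correct. One thing to be aware of: the paper does not actually prove this lemma at all --- it quotes the statement verbatim from Horn and Johnson and defers the proof entirely to that reference, so there is no ``paper proof'' to deviate from. What you have written is essentially the standard variational argument found in the cited textbook: the Courant--Fischer min-max characterization, the trial subspace $S \subseteq U \cap V$ obtained from the dimension count $\dim U + \dim V - d = j$, and additivity of the quadratic form. Both halves of your argument check out. For the first inequality, the index ranges are consistent over the full stated range ($j+k \leq d$ and $d-k \geq j \geq 1$ whenever $k = 0,\dots,d-j$), so $U$ and $V$ are always well defined and the intersection is at least $j$-dimensional. For the second inequality, the substitution $\lambda_i(-M) = -\lambda_{d-i+1}(M)$ applied to the first inequality for $-A$, $-B$, followed by the reindexing $j \mapsto d-j+1$ and $k \mapsto k-1$, gives exactly $\lambda_{j-k+1}(A) + \lambda_k(B) \leq \lambda_j(A+B)$ with $k$ ranging over $1,\dots,j$, as claimed; this duality trick is a clean way to avoid rerunning the max-min argument. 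In short, your proposal supplies a complete, self-contained proof of a statement the paper only cites, and it does so by the same route as the cited source.
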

By setting $k =0$ and $k=1$ in Eq.~\eqref{eqn_Weyl1} and by setting $k=1$ in Eq.~\eqref{eqn_Weyl2}, we obtain the following corollary.
\begin{lemma}
Let $A$ and $B$ be Hermitian $d\times d$ matrices. Let the respective eigenvalues of $A$, $B$, and $A+B$ be $\{\lambda_j(A)\}_{j=1}^d$, $\{\lambda_j(B)\}_{j=1}^d$, and $\{\lambda_j(A+B)\}_{j=1}^d$. Suppose that the eigenvalue in these sets are listed in non-decreasing order. Then
\begin{align}
        &\lambda_j(A+B) \leq \lambda_{j}(A) + \lambda_{d}(B)\quad\text{for }j = 1, ..., d\\
        &\lambda_j(A+B) \leq \lambda_{j+1}(A) + \lambda_{d-1}(B)\quad\text{for }j = 1, ..., d-1\\
        &\lambda_j(A)+\lambda_1(B) \leq \lambda_{j}(A+B)\quad\text{for }j = 1, ..., d\,.
\end{align}
\end{lemma}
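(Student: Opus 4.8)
The plan is to obtain all three inequalities as direct specializations of the general Weyl inequalities \eqref{eqn_Weyl1} and \eqref{eqn_Weyl2} proved in the preceding lemma, by choosing convenient values of the free index $k$ and then verifying that the resulting range of $j$ is admissible. Since the general statement is already in hand, the entire task reduces to careful substitution and index bookkeeping.

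First I would set $k=0$ in \eqref{eqn_Weyl1}: since $\lambda_{j+0}(A)=\lambda_j(A)$ and $\lambda_{d-0}(B)=\lambda_d(B)$, this gives $\lambda_j(A+B)\leq \lambda_j(A)+\lambda_d(B)$, and because $k=0$ satisfies $0\leq d-j$ for every $j$, the bound holds for $j=1,\dots,d$. Next I would set $k=1$ in \eqref{eqn_Weyl1}, yielding $\lambda_j(A+B)\leq \lambda_{j+1}(A)+\lambda_{d-1}(B)$; here $k=1$ requires $1\leq d-j$, equivalently $j\leq d-1$, which is exactly the stated range. Finally I would set $k=1$ in \eqref{eqn_Weyl2}, so the left-hand side collapses to $\lambda_{j-1+1}(A)+\lambda_1(B)=\lambda_j(A)+\lambda_1(B)$, giving $\lambda_j(A)+\lambda_1(B)\leq \lambda_j(A+B)$ for every $j=1,\dots,d$ since $k=1$ always lies in $\{1,\dots,j\}$.

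Each step is pure substitution, so there is no genuine obstacle; the only care required is checking that the chosen $k$ lies in the admissible set for the relevant $j$. The one subtlety worth flagging is that the second inequality loses the endpoint $j=d$ precisely because $k=1$ violates the constraint $k\leq d-j$ there, which is why it is stated only for $j=1,\dots,d-1$.
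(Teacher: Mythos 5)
Your proposal is correct and matches the paper's own derivation exactly: the paper obtains this lemma by setting $k=0$ and $k=1$ in Eq.~\eqref{eqn_Weyl1} and $k=1$ in Eq.~\eqref{eqn_Weyl2}, precisely as you do. Your additional bookkeeping on the admissible ranges of $j$ (in particular why the second inequality is restricted to $j \leq d-1$) is sound and consistent with the stated result.
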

Finally, by specializing the result even further, to the case where $B$ is rank-one with a positive eigenvalue, we obtain the following result.
\begin{lemma}
Let $A$ and $B$ be Hermitian $d\times d$ matrices. Let the respective eigenvalues of $A$, $B$, and $A+B$ be $\{\lambda_j(A)\}_{j=1}^d$, $\{\lambda_j(B)\}_{j=1}^d$, and $\{\lambda_j(A+B)\}_{j=1}^d$. Suppose that the eigenvalue in these sets are listed in non-decreasing order. Suppose that $B$ is rank-one and that the only non-zero eigenvalue of $B$ is $\lambda_B >0$. Then
\begin{align}
        \lambda_j(A+B) &\leq \lambda_{j}(A) + \lambda_B\quad\text{for }j = 1, ..., d\\
        \lambda_j(A+B) &\leq \lambda_{j+1}(A)\quad\text{for }j = 1, ..., d-1 \\
        \lambda_j(A) &\leq \lambda_{j}(A+B)\quad\text{for }j = 1, ..., d\,.
\end{align}
\end{lemma}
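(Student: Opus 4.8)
The plan is to derive this result purely as a specialization of the preceding corollary (the second lemma in this appendix), which already bounds $\lambda_j(A+B)$ in terms of the eigenvalues of $A$ and specific eigenvalues $\lambda_d(B)$, $\lambda_{d-1}(B)$, and $\lambda_1(B)$ of $B$. Since every inequality I need has the same left-hand side as an inequality in that corollary, the entire task reduces to evaluating the three quantities $\lambda_d(B)$, $\lambda_{d-1}(B)$, and $\lambda_1(B)$ under the rank-one hypothesis and substituting.

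First I would determine the full spectrum of $B$. By assumption $B$ is Hermitian and rank-one, so it has exactly one non-zero eigenvalue, namely $\lambda_B > 0$, and its remaining $d-1$ eigenvalues are all zero. Listing these in non-decreasing order as the corollary requires, the positive eigenvalue sits at the top and the zeros fill the bottom, giving
\begin{equation}
\lambda_1(B) = \cdots = \lambda_{d-1}(B) = 0, \qquad \lambda_d(B) = \lambda_B\,.
\end{equation}
The only subtlety worth flagging is that $d \geq 2$ is needed for $\lambda_{d-1}(B)$ to be one of the zero eigenvalues; for the middle inequality this is automatic since it is only claimed for $j = 1, \ldots, d-1$.

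Next I would substitute these values into the three inequalities of the corollary. The first bound $\lambda_j(A+B) \leq \lambda_j(A) + \lambda_d(B)$ becomes $\lambda_j(A+B) \leq \lambda_j(A) + \lambda_B$ directly. The second bound $\lambda_j(A+B) \leq \lambda_{j+1}(A) + \lambda_{d-1}(B)$ becomes $\lambda_j(A+B) \leq \lambda_{j+1}(A)$ because $\lambda_{d-1}(B) = 0$. The third bound $\lambda_j(A) + \lambda_1(B) \leq \lambda_j(A+B)$ becomes $\lambda_j(A) \leq \lambda_j(A+B)$ because $\lambda_1(B) = 0$. This yields exactly the three claimed inequalities, completing the proof.

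There is essentially no genuine obstacle here, since the heavy lifting was already done in passing from the general Weyl inequality to the corollary. The only point requiring care — and thus the closest thing to a ``hard part'' — is correctly placing the single positive eigenvalue of $B$ at the top of the non-decreasing list while recognizing that both $\lambda_{d-1}(B)$ and $\lambda_1(B)$ fall among the zeros; getting this ordering right is what makes the two zero substitutions valid and produces the clean interlacing form used later in Prop.~\ref{prop_spectrum_evalues}.
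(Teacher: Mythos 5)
Your proposal is correct and follows exactly the paper's route: the paper obtains this lemma by specializing its preceding corollary (the $k=0,1$ instances of Weyl's inequality) to a rank-one $B$, which amounts to precisely your substitutions $\lambda_d(B)=\lambda_B$ and $\lambda_{d-1}(B)=\lambda_1(B)=0$. Your explicit identification of the spectrum of $B$ in non-decreasing order is the only detail the paper leaves implicit, and you handle it correctly.
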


\section{Relating the diagonal elements of $\overline{\rho}$ and $Q$}\label{Appendix_diagonalelements}

In the main text, we related the eigenvalues and eigenvectors of $\overline{\rho}$ and $Q$. In this appendix, we note that one can also bound the deviation of the diagonal elements of $\overline{\rho}$ and $Q$. This is different from bounding the deviation of the ordered eigenvalues, and hence the following proposition is distinct from Prop.~\ref{prop_spectrum_evalues}. In fact, there is a simple equation that relates the diagonal elements of $\overline{\rho}$ and $Q$, as follows.
\begin{proposition}\label{prop_diagonalelements}
Consider a classical dataset of pure states $D_{\ket{\psi}} = \{\ket{\psi^{(i)}}\}_{i=1}^N$, with the ensemble denoted as $E_{\ket{\psi}} = \{p^{(i)},\ket{\psi^{(i)}}\}_{i=1}^N$. Let $\ket{\phi}$ be any quantum state and let $r_{\ket{\phi}} = \mte{\phi}{\overline{\rho}}$ and $q_{\ket{\phi}} = \mte{\phi}{Q}$. Then
\begin{equation}
    q_{\ket{\phi}} \leq r_{\ket{\phi}} \leq q_{\ket{\phi}} + \|\vec{\mu}\|^2\,.
\end{equation}
More specifically, the following equation holds:
\begin{equation}
    r_{\ket{\phi}}  = q_{\ket{\phi}}+ \|\vec{\mu}\|^2  |\ip{v_{\vec{\mu}}}{\phi}|^2
\end{equation}
where $\vec{\mu}$ is the mean vector and $\ket{v_{\vec{\mu}}} = \vec{\mu} / \|\vec{\mu}\|$ is the normalized mean vector.
\end{proposition}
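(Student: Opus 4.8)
The plan is to reduce the statement entirely to the operator identity $\overline{\rho} = Q + M$ established in Prop.~\ref{prop_uncentered_general}, combined with the rank-one structure of $M$ recorded in Lemma~\ref{lemma_M}. First I would sandwich this identity between $\bra{\phi}$ and $\ket{\phi}$ and invoke linearity of the expectation value to split off the contribution of $M$:
\begin{equation}
r_{\ket{\phi}} = \mte{\phi}{\overline{\rho}} = \mte{\phi}{Q} + \mte{\phi}{M} = q_{\ket{\phi}} + \mte{\phi}{M}\,.
\end{equation}
Everything then hinges on evaluating the single term $\mte{\phi}{M}$.

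For that term I would substitute the explicit form $M = \|\vec{\mu}\|^2 \dya{v_{\vec{\mu}}}$ supplied by Lemma~\ref{lemma_M}, where $\ket{v_{\vec{\mu}}} = \vec{\mu}/\|\vec{\mu}\|$ is the (normalized) mean vector. Since $M$ is proportional to a projector onto $\ket{v_{\vec{\mu}}}$, its expectation value collapses to an overlap:
\begin{equation}
\mte{\phi}{M} = \|\vec{\mu}\|^2\, \bra{\phi}\dya{v_{\vec{\mu}}}\ket{\phi} = \|\vec{\mu}\|^2\, |\ip{v_{\vec{\mu}}}{\phi}|^2\,.
\end{equation}
Inserting this back yields the exact equation $r_{\ket{\phi}} = q_{\ket{\phi}} + \|\vec{\mu}\|^2 |\ip{v_{\vec{\mu}}}{\phi}|^2$ claimed in the proposition.

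The two-sided inequality then follows immediately by bounding the overlap factor. Because both $\ket{\phi}$ and $\ket{v_{\vec{\mu}}}$ are normalized, the Cauchy--Schwarz inequality gives $0 \leq |\ip{v_{\vec{\mu}}}{\phi}|^2 \leq 1$. The lower bound forces the additive term $\|\vec{\mu}\|^2 |\ip{v_{\vec{\mu}}}{\phi}|^2$ to be nonnegative, so $q_{\ket{\phi}} \leq r_{\ket{\phi}}$, while the upper bound caps it at $\|\vec{\mu}\|^2$, so $r_{\ket{\phi}} \leq q_{\ket{\phi}} + \|\vec{\mu}\|^2$. There is no real obstacle here: the proof is essentially a one-line expectation-value computation, and the only point requiring care is ensuring the normalization of $\ket{v_{\vec{\mu}}}$ (guaranteed by Lemma~\ref{lemma_M}) so that the overlap term is genuinely confined to $[0,1]$; the saturation of each inequality corresponds exactly to $\ket{\phi}$ being orthogonal to, or colinear with, the mean vector.
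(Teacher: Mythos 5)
Your proof is correct and follows exactly the paper's own route: both rest on the identity $\overline{\rho} = Q + M$ from Prop.~\ref{prop_uncentered_general} together with $M = \|\vec{\mu}\|^2 \dya{v_{\vec{\mu}}}$ from Lemma~\ref{lemma_M}, sandwiched between $\bra{\phi}$ and $\ket{\phi}$. You merely spell out the overlap bound $0 \leq |\ip{v_{\vec{\mu}}}{\phi}|^2 \leq 1$ that the paper leaves implicit when it says the inequality follows from the equation.
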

\begin{proof}
The inequality follows from the equation in the proposition. The equation follows from $\overline{\rho} = Q+M$ and $M = \|\vec{\mu}\|^2 \dya{v_{\vec{\mu}}}$, which gives the desired result through:
\begin{equation}
  \mte{\phi}{\overline{\rho}} = \mte{\phi}{Q} + \mte{\phi}{M}\,.
\end{equation}
\end{proof}

\section{Extension to mixed state datasets}\label{Appendix_mixedstates}

We now discuss how the above results can be extended to mixed state datasets. For mixed state datasets, there exist straightforward mathematical generalizations of the previous results. The conceptual interpretation of such results is less obvious. Nevertheless we will discuss this below.

\subsection{Mixed-state dataset}

Consider a dataset of mixed states, which we denote as
\begin{equation}\label{eqn_dataset_mixed1_app}
    D_{\rho} = \{\rho^{(i)}\}_{i=1}^N\,.
\end{equation}
We assume a probability distribution over datapoints given by $P = \{p^{(i)}\}$.  Hence, one can defined the corresponding ensemble for this dataset as:
\begin{equation}
 E_{\rho}= \{p^{(i)},\rho^{(i)} \}_{i=1}^N   
\end{equation}
For this ensemble, the ensemble average density matrix is:
\begin{align}
    \overline{\rho} &= \sum_{i=1}^N p^{(i)}\rho^{(i)}\,.
\end{align}

\subsection{Effective pure-state dataset}

Let us note that each mixed-state datapoint can be decomposed as a convex combination of pure states: 
\begin{equation}
    \rho^{(i)} = \sum_{m=1}^{R_i} s^{(i,m)} \dya{\psi^{(i,m)}}
\end{equation}
Hence, we can rewrite the ensemble average density matrix as:
\begin{align}
    \overline{\rho} &= \sum_{i=1}^N p^{(i)}\rho^{(i)} \\
    &= \sum_{i=1}^N \sum_{m=1}^{R_i} p^{(i)} s^{(i,m)}\dya{\psi^{(i,m)}}\label{eqn_rhobar_appendix}
\end{align}
We can introduce an index $k = (i,m)$ and then we have
\begin{align}
    \overline{\rho} = \sum_{k=1}^{\widehat{N}}  \widehat{p}^{(k)} \dya{\psi^{(k)}}
\end{align}
where 
\begin{equation}\label{eqn_effective_parameters}
\widehat{N} = \sum_{i=1}^N R_i\quad\text{and}\quad \widehat{p}^{(k)} = p^{(i)} s^{(i,m)}\,.
\end{equation}
The above equation suggests that we could interpret
\begin{equation}
    \widehat{D}_{\rho} = \{ \dya{\psi^{(k)}}  \}_{k=1}^{\widehat{N}}
\end{equation}
as an effective dataset, with pure state datapoints. The corresponding effective ensemble is then 
\begin{equation}\label{eqn_effective_ensemble}
    \widehat{E}_{\rho} = \{ \widehat{p}^{(k)}, \dya{\psi^{(k)}}  \}_{k=1}^{\widehat{N}}
\end{equation}

With these definitions in hand, we can now see how our main results, for pure state datasets, can generalize to mixed state datasets. If we are willing to reinterpret the dataset as $ \widehat{D}_{\rho}$ and the ensemble as $ \widehat{E}_{\rho}$, then all of our results can be extended to this case. 
The idea is that we will relate $\overline{\rho}$ to the covariance matrix for an unraveling of the ensemble $ \widehat{E}_{\rho}$. Consequently, the conceptual interpretation of diagonalizing $\overline{\rho}$ is that it corresponds to performing PCA on a dataset composed of the pure states that decompose the mixed states in the original dataset. This interpretation is made precise in Prop.~\ref{prop_mixedstates_appendix} below.

\subsection{Theoretical results}

Mixed-state datapoints naturally lack a global phase, since global phases are only relevant for statevector datasets. Therefore, in the context of mixed-state datasets, it seems natural to consider our results above for the case where global phase information is not important, or not relevant. Such results were given in Lemma~\ref{lemma2} and Prop.~\ref{prop_quantumdataset}. Hence, in what follows we will state analogs of Lemma~\ref{lemma2} and Prop.~\ref{prop_quantumdataset} for mixed-state datasets.

\begin{lemma}\label{lemma_mixed_appendix}
Consider an ensemble of mixed states $E_{\rho} = \{p^{(i)},\rho^{(i)}\}_{i=1}^N$. Let $\widehat{E}_{\rho}=\{ \widehat{p}^{(k)}, \dya{\psi^{(k)}}  \}_{k=1}^{\widehat{N}}$ be the corresponding effective ensemble of pure states, as given in \eqref{eqn_effective_parameters} and \eqref{eqn_effective_ensemble}. Then, there always exists a statevector ensemble $E_{\ket{\tilde{\psi}}} = \{\tilde{p}^{(j)},\ket{\tilde{\psi}^{(j)}}\}_{j=1}^{\tilde{N}}$ that satisfies the following conditions:
\begin{itemize}
    \item $E_{\ket{\tilde{\psi}}}$ physically corresponds to the aforementioned $\widehat{E}_{\rho}$, in the sense that applying the outer product mapping leads to $\PC(E_{\ket{\tilde{\psi}}}) = \widehat{E}_{\rho}$. In other words, $E_{\ket{\tilde{\psi}}}$ is an unraveling of $\widehat{E}_{\rho}$.
    \item $E_{\ket{\tilde{\psi}}}$ is centered, i.e., the mean value of all features is zero.
\end{itemize}
\end{lemma}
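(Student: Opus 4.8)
The plan is to recognize that the effective ensemble $\widehat{E}_{\rho}$ is itself a pure-state ensemble, so that the result follows immediately from the symmetrization construction already used to establish Lemma~\ref{lemma2}. Indeed, by its definition in Eq.~\eqref{eqn_effective_ensemble}, $\widehat{E}_{\rho} = \{\widehat{p}^{(k)},\dya{\psi^{(k)}}\}_{k=1}^{\widehat{N}}$ consists entirely of pure-state density matrices with associated probabilities. Hence it has exactly the structure of the pure-state density-matrix ensemble to which Lemma~\ref{lemma2} applies, and the only new ingredient here is the reinterpretation of the original mixed-state data via its pure-state decompositions.

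First I would fix an unraveling $\{\ket{\psi^{(k)}}\}_{k=1}^{\widehat{N}}$ of the density matrices $\{\dya{\psi^{(k)}}\}_{k=1}^{\widehat{N}}$ appearing in $\widehat{E}_{\rho}$. Then I would apply the symmetrization construction verbatim: I would form $\tilde{N} = 2\widehat{N}$ statevectors by appending the negated copies $\{-\ket{\psi^{(k)}}\}_{k=1}^{\widehat{N}}$ to the original list, and assign each of the $2\widehat{N}$ resulting states half of the corresponding original probability, namely $\widehat{p}^{(k)}/2$. This defines the candidate ensemble $E_{\ket{\tilde{\psi}}}$.

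Next I would verify the two required conditions, following the calculations in Lemma~\ref{lemma2}. For centering, the mean of each feature is a sum of paired contributions $+\tfrac{\widehat{p}^{(k)}}{2}\,y^{(k)}_j$ and $-\tfrac{\widehat{p}^{(k)}}{2}\,y^{(k)}_j$ that cancel termwise, so $\vec{\mu}=\vec{0}$. For the unraveling condition, I would use that
\begin{equation}
\PC(\ket{\psi^{(k)}}) = \PC(-\ket{\psi^{(k)}}) = \dya{\psi^{(k)}}\,,
\end{equation}
so that applying $\PC$ to $E_{\ket{\tilde{\psi}}}$ produces each $\dya{\psi^{(k)}}$ twice, once from each copy, with combined probability $\tfrac{\widehat{p}^{(k)}}{2}+\tfrac{\widehat{p}^{(k)}}{2}=\widehat{p}^{(k)}$. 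After aggregating these duplicate density matrices we recover exactly $\PC(E_{\ket{\tilde{\psi}}}) = \widehat{E}_{\rho}$, as desired.

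I do not expect a genuine obstacle here, precisely because once the mixed-state ensemble has been re-expressed as the effective pure-state ensemble $\widehat{E}_{\rho}$, the argument reduces word-for-word to the proof of Lemma~\ref{lemma2}. The only step requiring any care is bookkeeping: keeping the compound index $k=(i,m)$ and the definitions $\widehat{N}=\sum_i R_i$, $\widehat{p}^{(k)}=p^{(i)}s^{(i,m)}$ from Eq.~\eqref{eqn_effective_parameters} straight, so that the doubled, probability-halved, negated ensemble is manifestly an unraveling of $\widehat{E}_{\rho}$ rather than of the original mixed-state ensemble $E_{\rho}$.
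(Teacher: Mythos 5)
Your proposal is correct and follows essentially the same route as the paper's own proof: the paper also treats $\widehat{E}_{\rho}$ as a pure-state ensemble, applies the symmetrization construction of Lemma~\ref{lemma2} to it (negated copies appended, probabilities halved), and verifies centering plus the unraveling condition by aggregating the duplicated density matrices. No gaps; your bookkeeping of the compound index $k=(i,m)$ and of $\widehat{p}^{(k)}=p^{(i)}s^{(i,m)}$ matches the paper's Eq.~\eqref{eqn_centeredensemble_appendix}.
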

\begin{proof}
The proof is essentially the same as the proof of Lemma~\ref{lemma2}. In fact, one can view this as a corollary of Lemma~\ref{lemma2}, where one applies Lemma~\ref{lemma2} to the ensemble $\widehat{E}_{\rho}$.

A statevector ensemble that satisfies the two criteria stated in the lemma is:
\begin{align}\label{eqn_centeredensemble_appendix}
    E_{\ket{\tilde{\psi}}}  =& \bigg\{ (\frac{\widehat{p}^{(1)}}{2},\ket{\psi^{(1)}}),..., (\frac{\widehat{p}^{(\widehat{N})}}{2},\ket{\psi^{(\widehat{N})}}),\notag\\
    &(\frac{\widehat{p}^{(1)}}{2},-\ket{\psi^{(1)}}),...,(\frac{\widehat{p}^{(\widehat{N})}}{2},-\ket{\psi^{(\widehat{N})}})\bigg\}\,.
\end{align}
Note that this ensemble is centered, due to its symmetric nature.

In addition, one can see that $\PC(E_{\ket{\tilde{\psi}}}) = \widehat{E}_{\rho}$, and hence $E_{\ket{\tilde{\psi}}}$ is an unraveling of $\widehat{E}_{\rho}$. This follows because $\PC(E_{\ket{\tilde{\psi}}})$ has $\widehat{N}$ datapoints that appear twice in ensemble, and these redundant datapoints can be aggregated to give $E_{\ket{\tilde{\psi}}}$.
\end{proof}

With the previous lemma in hand, we can state the following proposition, which is our main result for mixed-state datasets. Note that the following proposition generalizes the result in Prop.~\ref{prop_quantumdataset}.
\begin{proposition}\label{prop_mixedstates_appendix}
Consider an ensemble of mixed states $E_{\rho} = \{p^{(i)},\rho^{(i)}\}_{i=1}^N$. Let $\widehat{E}_{\rho}=\{ \widehat{p}^{(k)}, \dya{\psi^{(k)}}  \}_{k=1}^{\widehat{N}}$ be the corresponding effective ensemble of pure states, as given in \eqref{eqn_effective_parameters} and \eqref{eqn_effective_ensemble}. Then, there always exists a statevector ensemble $E_{\ket{\tilde{\psi}}} = \{\tilde{p}^{(j)},\ket{\tilde{\psi}^{(j)}}\}_{j=1}^{\tilde{N}}$ that satisfies the following conditions:
\begin{itemize}
    \item $E_{\ket{\tilde{\psi}}}$ physically corresponds to the aforementioned $\widehat{E}_{\rho}$, in the sense that applying the outer product mapping leads to $\PC(E_{\ket{\tilde{\psi}}}) = \widehat{E}_{\rho}$. In other words, $E_{\ket{\tilde{\psi}}}$ is an unraveling of $\widehat{E}_{\rho}$.
    \item The covariance matrix $Q(E_{\ket{\tilde{\psi}}})$ for $E_{\ket{\tilde{\psi}}}$ is equal to the ensemble average density matrix for $E_{\rho}$:
    \begin{equation}
    Q(E_{\ket{\tilde{\psi}}}) = \overline{\rho}(E_{\rho})\,.
\end{equation}
\end{itemize}
\end{proposition}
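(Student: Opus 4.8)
The plan is to reuse verbatim the centered statevector ensemble $E_{\ket{\tilde{\psi}}}$ already constructed in Lemma~\ref{lemma_mixed_appendix}, namely the symmetrized ensemble of Eq.~\eqref{eqn_centeredensemble_appendix} built from the pure states of the effective ensemble $\widehat{E}_{\rho}$. The first bullet of the proposition---that $E_{\ket{\tilde{\psi}}}$ is an unraveling of $\widehat{E}_{\rho}$, i.e. $\PC(E_{\ket{\tilde{\psi}}}) = \widehat{E}_{\rho}$---is exactly the first conclusion of Lemma~\ref{lemma_mixed_appendix}, so nothing new is required there. The entire task reduces to verifying the second bullet, the identity $Q(E_{\ket{\tilde{\psi}}}) = \overline{\rho}(E_{\rho})$, and I would establish this through a short chain of equalities, mirroring the proof of Proposition~\ref{prop_quantumdataset}.

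First, since Lemma~\ref{lemma_mixed_appendix} guarantees that $E_{\ket{\tilde{\psi}}}$ is centered ($\vec{\mu}=\vec{0}$), I would invoke Proposition~\ref{prop_centered}---equivalently, set $\vec{\mu}=\vec{0}$ in Eq.~\eqref{eqn_lemma1}---to conclude $Q(E_{\ket{\tilde{\psi}}}) = \overline{\rho}(E_{\ket{\tilde{\psi}}})$. Second, I would compute $\overline{\rho}(E_{\ket{\tilde{\psi}}})$ directly from the outer-product sum: the two halves of the symmetrized ensemble contribute $\tfrac{\widehat{p}^{(k)}}{2}\dya{\psi^{(k)}}$ and $\tfrac{\widehat{p}^{(k)}}{2}(-\ket{\psi^{(k)}})(-\bra{\psi^{(k)}})$, where the minus signs cancel in the outer product. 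Hence the two halves coincide and sum to $\sum_k \widehat{p}^{(k)}\dya{\psi^{(k)}} = \overline{\rho}(\widehat{E}_{\rho})$.

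The final link is to recognize that the ensemble average density matrix is blind to how each mixed state was decomposed into pure states: using $\widehat{p}^{(k)} = p^{(i)} s^{(i,m)}$ together with the convex decomposition $\rho^{(i)} = \sum_m s^{(i,m)}\dya{\psi^{(i,m)}}$, the sum $\sum_k \widehat{p}^{(k)}\dya{\psi^{(k)}}$ collapses back to $\sum_i p^{(i)}\rho^{(i)} = \overline{\rho}(E_{\rho})$, exactly as already recorded in Eq.~\eqref{eqn_rhobar_appendix}. Chaining these three equalities yields $Q(E_{\ket{\tilde{\psi}}}) = \overline{\rho}(E_{\ket{\tilde{\psi}}}) = \overline{\rho}(\widehat{E}_{\rho}) = \overline{\rho}(E_{\rho})$, which is the claim.

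I do not expect any genuine obstacle, since the statement is essentially a corollary of Lemma~\ref{lemma_mixed_appendix} and Proposition~\ref{prop_centered}. The one step that warrants care is the very last equality, where one passes from the effective pure-state ensemble $\widehat{E}_{\rho}$ back to the original mixed-state ensemble $E_{\rho}$; this is the only place where the convex-decomposition structure of the $\rho^{(i)}$ is actually used, and it is worth stating explicitly that the decomposition $\{s^{(i,m)},\ket{\psi^{(i,m)}}\}$ need not be unique---any valid pure-state decomposition produces the same $\overline{\rho}(E_{\rho})$, so the construction is well defined regardless of the chosen unraveling.
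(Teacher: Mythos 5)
Your proposal is correct and follows essentially the same route as the paper's own proof: take the symmetrized ensemble from Lemma~\ref{lemma_mixed_appendix}, apply Prop.~\ref{prop_centered} via centeredness, and chain $Q(E_{\ket{\tilde{\psi}}}) = \overline{\rho}(E_{\ket{\tilde{\psi}}}) = \overline{\rho}(\widehat{E}_{\rho}) = \overline{\rho}(E_{\rho})$. Your extra remarks (the explicit sign cancellation and the observation that any valid pure-state decomposition yields the same $\overline{\rho}(E_{\rho})$) only make explicit what the paper leaves implicit.
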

\begin{proof}
We will prove this by constructing an $E_{\ket{\tilde{\psi}}}$ that satisfies the required conditions. Specifically we will use the ensemble in \eqref{eqn_centeredensemble_appendix}. Of course, we already showed above that this ensemble is an unraveling of $\widehat{E}_{\rho}$. 
So we just need to prove the second condition. Let us first note that $E_{\ket{\tilde{\psi}}}$ is centered, as discussed above. Hence we can apply Prop.~\ref{prop_centered} to see that
\begin{equation}
    Q(E_{\ket{\tilde{\psi}}}) = \overline{\rho}(E_{\ket{\tilde{\psi}}})\,.
\end{equation}
Next we note that $\overline{\rho}(E_{\ket{\tilde{\psi}}}) = \overline{\rho}(\widehat{E}_{\rho})$, since $E_{\ket{\tilde{\psi}}}$ is an unraveling of $\widehat{E}_{\rho}$. Finally, it is clear that $\overline{\rho}(\widehat{E}_{\rho}) = \overline{\rho}(E_{\rho})$, which follows from \eqref{eqn_rhobar_appendix}. This proves the desired result.
\end{proof}

\end{document}